\newtheorem{definition}{Definition}
\newtheorem{lemma}{Lemma}
\newtheorem{theorem}{Theorem}
\newcommand{\tr}{{\rm tr}}
\newcommand{\eprint}[1]{#1}
\begin{document}


\title{Graph-theoretic approach to Bell experiments with low detection efficiency}


\author{Zhen-Peng Xu}
\affiliation{School of Physics and Optoelectronics Engineering, Anhui University, 230601 Hefei, People’s Republic of China}
\affiliation{Naturwissenschaftlich-Technische Fakult\"at, Universit\"at Siegen, Walter-Flex-Stra{\ss}e 3, 57068 Siegen, Germany}

\author{Jonathan Steinberg}
\affiliation{Naturwissenschaftlich-Technische Fakult\"at, Universit\"at Siegen, Walter-Flex-Stra{\ss}e 3, 57068 Siegen, Germany}

\author{Jaskaran Singh}
\affiliation{Departamento de F\'{\i}sica Aplicada II, Universidad de Sevilla, E-41012 Sevilla, Spain}

\author{Antonio J. L\'opez-Tarrida}
\affiliation{Departamento de F\'{\i}sica Aplicada II, Universidad de Sevilla, E-41012 Sevilla, Spain}

\author{Jos\'e R. Portillo}
\affiliation{Departamento de Matem\'atica Aplicada I, Universidad de Sevilla, E-41012 Sevilla, Spain}
\affiliation{Instituto Universitario de Investigaci\'on de Matem\'aticas Antonio de Castro Brzezicki, Universidad de
Sevilla, E-41012 Sevilla, Spain}

\author{Ad\'an Cabello}
\email{adan@us.es}
\affiliation{Departamento de F\'{\i}sica Aplicada II, Universidad de Sevilla, E-41012 Sevilla, Spain}
\affiliation{Instituto Carlos~I de F\'{\i}sica Te\'orica y Computacional, Universidad de
Sevilla, E-41012 Sevilla, Spain}

\maketitle


\begin{abstract}
Bell inequality tests where the detection efficiency is below a certain threshold $\eta_{\rm{crit}}$ can be simulated with local hidden-variable models. 
Here, we introduce a method to identify Bell tests requiring low $\eta_{\rm{crit}}$ and relatively low dimension $d$ of the local quantum systems. The method has two steps. First, we show a family of bipartite Bell inequalities for which, for correlations produced by maximally entangled states, $\eta_{\rm{crit}}$ can be upper bounded by a function of some invariants of graphs, and use it to identify correlations that require small $\eta_{\rm{crit}}$. We present examples in which, for maximally entangled states, $\eta_{\rm{crit}} \le 0.516$ for $d=16$, $\eta_{\rm{crit}} \le 0.407$ for $d=28$, and $\eta_{\rm{crit}} \le 0.326$ for $d=32$. We also show evidence that the upper bound for $\eta_{\rm{crit}}$ can be lowered down to $0.415$ for $d=16$ and present a method to make the upper bound of $\eta_{\rm{crit}}$ arbitrarily small by increasing the dimension and the number of settings. All these upper bounds for $\eta_{\rm{crit}}$ are valid (as it is the case in the literature) assuming no noise. The second step is based on the observation that, using the initial state and measurement settings identified in the first step, we can construct Bell inequalities with smaller $\eta_{\rm{crit}}$ and better noise robustness. For that, we use a modified version of Gilbert's algorithm that takes advantage of the automorphisms of the graphs used in the first step. We illustrate its power by explicitly developing an example in which $\eta_{\rm{crit}}$ is $12.38\%$ lower and the required visibility is $14.62\%$ lower than the upper bounds obtained in the first step. The tools presented here may allow for developing high-dimensional loophole-free Bell tests and loophole-free Bell nonlocality over long distances.
\end{abstract}



\section{Introduction}


Bell nonlocality \cite{Bell64,BCPSW14}, that is, the violation of inequalities satisfied by any local hidden-variable (LHV) model, called Bell inequalities, is one of the most fascinating features of quantum mechanics and a crucial mean to accomplish tasks that are impossible with classical resources. 

The detection efficiency $\eta$ in an experimental test of a Bell inequality is the ratio between the number of systems detected by one party and the number of pairs emitted by the source. $\eta$ depends not only on the efficiency of the detectors, but also on all the losses occurring during the distribution of the state. Pearle \cite{Pearle70} and Wigner \cite{Wigner70} noticed that experimental correlations in Bell inequality tests where the detection efficiency is below a certain threshold $\eta_{\rm{crit}}$ can be simulated by LHV models. Therefore, if $\eta$ is not high enough, the quantum advantage in many Bell inequality-based protocols (e.g., for randomness expansion \cite{Colbeck06,PAMBMMOHLMM10,SZBS19,LLRZB19} and secure key distribution \cite{Ekert91,MY98,BHK05,ABGMPS07,PABGMS09}) vanishes. Avoiding this so-called ``detection loophole'' requires surpassing $\eta_{\rm{crit}}$. However, $\eta_{\rm{crit}}$ depends on the quantum correlations (i.e., the state prepared and the measurements performed) and the targeted Bell inequality. 

In this work, we identify quantum correlations and Bell inequalities requiring the smallest $\eta_{\rm{crit}}$ reported, to our knowledge, in the literature for symmetric Bell tests (i.e., those in which all detectors have the same detection efficiency). The correlations are produced by maximally entangled states of experimentally accessible dimensions (e.g., $d=16$) and two-outcome measurements. In addition, we describe a method that provides arbitrarily small $\eta_{\rm{crit}}$ but requires higher dimensions.

The importance of these results lies on the fact that they pave the way to extend the groundbreaking loophole-free Bell tests performed on quantum systems of dimension $d=2$ (i.e., qubits) and distances between $60$ and $1300$ m \cite{HBD15,GVW15,SMC15,RBG17} to quantum systems of higher dimensions and longer distances.


\subsection{Structure of the paper}


In Sec.~\ref{Prev}, we collect the smallest values of $\eta_{\rm{crit}}$ reported in the literature and explain how they have been obtained. We also recall a standard form of expressing Bell functionals that will be used in several places.

Finding states, measurements, and Bell inequalities leading to small values of $\eta_{\rm{crit}}$ is difficult. One of the reasons is that, as we will see, these values occur for Bell inequalities with many settings for which computing the local bound is, in general, intractable \cite{Pitowsky89}. Another reason is the difficulty of, given a Bell inequality, finding the state and measurements producing maximal quantum violations.

In Sec.~\ref{GBBI}, we introduce a method that overcomes both problems, as it connects the local bound of a family of bipartite Bell inequalities to the independence number of a graph, and quantum realizations that maximally violate the Bell inequalities to orthogonal representations of the graph.
This allows us to identify examples of quantum correlations and Bell inequalities with unprecedentedly low $\eta_{\rm{crit}}$.

In Sec.~\ref{sec:asde}, we present a method based on graph theory to obtain explicit states and measurements making $\eta_{\rm{crit}}$ arbitrarily close to zero.

There are, however, two problems that make the examples obtained up to that point not useful in practice. One is that, in some cases, they require too many settings. In Sec.~\ref{sec:search}, we present a method for searching for examples with low $\eta_{\rm{crit}}$ and smaller number of settings. 

The other problem is that, by construction, the quantum violations of the graph-based Bell inequalities are very sensitive to noise. In Sec.~\ref{sec:no}, we explain why and present a method to address this problem. The method applies to any of the (highly-symmetric) examples identified with the techniques in Sec.~\ref{GBBI} and produces a Bell inequality for the same correlations (i.e., the same state and measurements) with smaller $\eta_{\rm{crit}}$ and higher resistance to noise.

Finally, in Sec.~\ref{sec:alg}, we discuss how the different methods presented in this work can help us to design loophole-free Bell tests between high-dimensional quantum systems and achieve loophole-free Bell nonlocality over longer distances.


\section{Previous works}
\label{Prev}


\subsection{Existing values of $\eta_{\rm{crit}}$}
\label{sec:pw}


For symmetric Bell tests and perfect visibility (as defined later), $\eta_{\rm{crit}}=0.828$ using maximally entangled states \cite{GM87} for the simplest Bell inequality (the one with two dichotomic settings per party): the Clauser-Horne-Shimony-Holt (CHSH) inequality \cite{CHSH69}. Eberhard \cite{Eberhard93} noticed that $\eta_{\rm{crit}}$ can be further reduced using nonmaximally entangled states and, in particular, that it can be lowered down to $0.667$ for the Clauser-Horne (CH) inequality \cite{CH74}.

For Bell inequalities with four settings and maximally entangled states, $\eta_{\rm{crit}}$ is not better than for CHSH (with one exception that allows a slightly lower value; $\eta_{\rm{crit}} = 0.821$) \cite{BG08,ZG19}.

Although loophole-free Bell tests \cite{HBD15,GVW15,SMC15,RBG17} have proven that it is possible to produce correlations between local quantum systems of dimension $d=2$ that do not admit LHV models, the value of $\eta_{\rm{crit}}$ required in these experiments ($\ge 0.720$ due to the noise) prevents real-life applications and, in particular, applications outside laboratories with well controlled losses, or involving longer distances (e.g., $> 5$ km). It is in this sense that the requirement of $\eta_{\rm{crit}} > 0.667$ (without noise, $\eta_{\rm{crit}} > 0.720$ with noise) acts as a bottleneck that blocks many real-life applications.

Massar \cite{Massar02} discovered that high-dimensional systems can tolerate a detection efficiency that decreases with the dimension~$d$ of the local quantum system. However, he only found an improvement over the qubit case for $d > 1600$. V\'ertesi, Pironio, and Brunner \cite{VPB10} lowered $\eta_{\rm{crit}}$ down to $0.770$ for maximally entangled states and to $0.618$ for nonmaximally entangled states using $d=4$. 

M\'arton, Bene, and V\'ertesi \cite{MBV21} have studied the case of $N$ copies of the two-qubit maximally entangled state and local Pauli measurements which act in the corresponding qubit subsystems. They obtained the following upper bounds for $\eta_{\rm{crit}}$: $0.809$ for $N=2$ (which is equivalent to $d=4$), $0.740$ for $N=3$ (which is equivalent to $d=8$), and $0.693$ for $N=4$ (which is equivalent to $d=16$). More recently, Miklin {\em et al.} have reported a method that allows us to reduce $\eta_{\rm{crit}}$ down to $\approx 0.469$ for $d=512$ \cite{M22}.

In this article, we focus on the case of two parties and symmetric detection efficiency, since we see it as the closest one for realistic applications. However, it should be mentioned that another way of obtaining low $\eta_{\rm{crit}}$ is by considering Bell experiments involving $ \ge 3$ spatially separated parties \cite{LS01,CRV08,PVB12}. Still, the difficulty for preparing and distributing the required states makes the resulting values $\eta_{\rm{crit}}$ unpractical for actual applications. In addition, it should also be pointed out that $\eta_{\rm{crit}}$ can be reduced for some detectors at the expense of requiring $\eta_{\rm{crit}} \approx 1$ for other detectors \cite{CL07,BGSS07,Garbarino10,AQCFCT12}.


\subsection{Calculating $\eta_{\rm{crit}}$}
\label{ceta}


Here, we summarize the different ways for calculating $\eta_{\rm{crit}}$ that can be found in the literature for the case that the propagation losses and detection inefficiencies are the same for all the detectors. For a more detailed discussion, see Ref.~\cite{Larsson14}.

In an ideal Bell test, every run would end up with the two particles emitted by the source being detected, one at one party's site and the other at the other party's. However, in a real Bell test this may not be the case. As pointed out before, reasons for that are the existence of propagation losses and the imperfection of the detectors. Consequently, a detection at one site may not be accompanied with a detection at the other site and also, in some runs, both particles may be undetected. 

In some cases, in addition to the local losses and imperfect detectors, the particles are not emitted at well-known times, then the number of emitted pairs, and thus the number of runs of the Bell test, will be unknown. 

Below we summarize how $\eta_{\rm{crit}}$ can be calculated in each case.

{\em Case I.} The number of runs is known. This is the case in event-ready experiments \cite{HKO12,HBD15,RBG17} and in the proposals \cite{RL09,GPS10,branciard2011detection,CS12} and experiments \cite{MMGSFHCRJ16} with heralded detection.

{\em Case I. Strategy I.} We can associate the no-detection with a new outcome of the measurement and find a new Bell inequality with the same number of settings but one more outcome per setting than in the original Bell inequality. Then, the experiment will be free of the detection loophole as soon as the new inequality is violated. The problem is that finding this new Bell inequality may be difficult. 

For example, if we add a new outcome to all the measurements in the $(2,m,2)$ Bell scenario (i.e., the one with $2$ parties, $m$ settings per party, and $2$ outcomes), then we end up in the $(2,m,3)$ Bell scenario. Then, the number of deterministic LHV assignments changes from $2^{2m}$ to $3^{2m}$. Meanwhile, the dimension of the LHV assignments changes from $2m+m^2$ to $4m+4m^2$. To make clear the difficulties that finding such Bell inequalities involve, notice that, so far, we do not know the Bell inequalities for any $(2,m,3)$ Bell scenario. 

{\em Case I. Strategy II.} We can associate the no-detection with one of the outcomes of each measurement and use the original Bell inequality. Then, the experiment will be loophole-free as soon as the original Bell inequality is violated.

To obtain $\eta_{\rm crit}$ in this case, we can act as follows. In an ideal Bell test in which we could achieve the maximum quantum value $Q$ of a Bell expression $I$ whose bound for LHV models is $C$, and in which the experimental detection efficiency is $\eta$, the experimental value of $I$ would be
\begin{equation}
    I_{\rm exp} = \eta^2 Q + \eta(1-\eta)(Q_A + Q_B)+ (1-\eta)^2 X,
\end{equation}
where $Q_A$ is the value of $I$ resulting of what the parties output when Alice has detected the particle but not Bob (and similarly $Q_B$), and $X$ is the value that the parties output when both Alice and Bob have not detected the particles. Usually, the outputs are chosen such that $X=C$. The critical detection efficiency using this strategy is
\begin{equation}\label{eq:protocol1}
    \eta_{\rm crit} = \frac{2C - Q_A - Q_B}{C + Q - Q_A - Q_B}.
\end{equation}

This is the strategy used to obtain all the values previously reported except the ones in \cite{Eberhard93,VPB10}. This will also be the strategy used in this paper.

{\em Case II.} The number of runs is not known. This is the case in existing photonic Bell tests with high detection efficiency \cite{GMR13,CMA13,YXZ18,SLT18,BKG18,LZL18,LWZ18,ZSB20,SZB21,LZL21,LLR21,LZZ22}.

{\em Case II. Strategy III.} This strategy works for the CH Bell inequality \cite{CH74}, for inequalities that can be written in terms of the CH functional \cite{VPB10}, and for some Bell inequalities that can be rewritten similarly \cite{CCX13}. The CH functional has a peculiarity that makes it useful. It only contains joint and marginal probabilities of one of the outcomes. Then, there is no need to modify the bound for LHV models when $\eta$ is not $1$. Instead, a lower $\eta$ makes that the probabilities become lower so that the value of the Bell expression decreases. The joint probabilities decrease faster than the marginal probabilities. This strategy uses the expected values of $\eta$ and the noise for choosing the nonmaximally entangled state that maximizes the violation \cite{Eberhard93}. 

This is the strategy used in \cite{Eberhard93,VPB10} and adopted in the photonic loophole-free Bell tests \cite{GVW15,SMC15}.

{\em Case II. Other strategies.} These strategies compute $\eta_{\rm crit}$ under extra assumptions on the distribution of nondetections. See, e.g., \cite{GM87}. We do not enter into details here, as the need of extra assumptions is considered a weak point \cite{LS81,Larsson14}.


\subsection{Collins-Gisin parametrization}


Following the idea introduced in Ref.~\cite{CG04}, we will sometimes use a matrix to specify the functional $I$ associated to a Bell inequality $I \le C$ (where $C$ is the bound for LHV models) as
\begin{equation}
\begin{aligned}
&I = \\
&\left(
\setlength{\tabcolsep}{0.2em}
\begin{tabular}{c|ccc}
 & $c(\Pi_1^A\! =\! 1)$ & \dots & $c(\Pi_m^A\! =\! 1)$ \\
\hline 
$c(\Pi_1^B\! =\! 1)$ & $c(\Pi_1^A\! =\! \Pi_1^B\! =1)$ & $\dots$ & $c(\Pi_m^A\! =\! \Pi_1^B\! =\! 1)$ \\
$\vdots$ & $\vdots$ & $\ddots$ & $\vdots$ \\
$c(\Pi_m^B\! =\! 1)$ & $c(\Pi_1^A\! =\! \Pi_m^B\! =\! 1)$ & $\dots$ & $c(\Pi_m^A\! =\! \Pi_m^B\! =\! 1)$ \\
\end{tabular}
\right),
\label{eq:bell_functional}
\end{aligned}
\end{equation}
where the entries are coefficients for different terms that appear in $I$, which is assumed to contain only two-outcome ($0$ and $1$) measurements. As an 
example, $c(\Pi_1^A=1)$ indicates the coefficient that multiplies $P(\Pi_1^A=1)$ and 
$c(\Pi_1^A=\Pi_1^B=1)$ the coefficient that multiplies $P(\Pi_1^A=\Pi_1^B=1)$.

This technique allows us to write any Bell functional (even those with measurements with more than two outcomes) as a linear combination of joint and marginal probabilities, but without including one of the outcomes of each measurement. The coefficients of the probabilities involving this outcome can be computed from
the normalization and no-signalling conditions.


\section{Graph-theoretic approach to Bell nonlocality with low detection efficiency} 
\label{GBBI}


\subsection{General results}


Here, we first introduce a family of bipartite Bell inequalities, in which each inequality is associated to a graph $G$, such that the number of settings of each party coincides with the number of vertices of $G$ and the number of outcomes is two. The first interesting point about this family is that the LHV bound of each inequality coincides with the independence number of $G$. Therefore, we can take advantage of the vast literature on independence numbers of countless families of graphs to write Bell inequalities whose local bounds would be difficult to compute otherwise.


\begin{definition}[Independent set]
An independent set \cite{Diestel17} of a graph $G$ is a subset of vertices where any two vertices are nonadjacent.
\end{definition}

Hereafter, $u\sim v$ will indicate that $u$ and $v$ are adjacent.

\begin{definition}[Independence number]
The independence number \cite{Diestel17} of a graph $G$, denoted by $\alpha$, is the largest cardinality of any independent set of $G$.
\end{definition}


\begin{definition}[Xi number]
The xi number of a graph $G$ is
\begin{align}
\Xi = \min_{S\in \mathcal{S}_{\alpha+1}} \Xi(S), 
\end{align}
where $\mathcal{S}_{\alpha+1}$ is the set of all subsets of $(\alpha+1)$ vertices of $G$, where $\alpha$ is the independence number of $G$, and
\begin{align}
\Xi(S) := \max_{v\in S} \left\vert \{u|u\sim v, u\in S\} \right\vert.
\end{align}
\end{definition}


For example, $\Xi=2$ for the circulant graph $Ci_{10}(2,3)$, i.e., the $10$-vertex graph in which vertex $i$ is adjacent to vertices $i+2$ and $i+3$.


\begin{definition}[Circulant graph]
A graph with vertices $1,\ldots,|V|$ is circulant if the cyclic permutation $(1,\ldots,|V|)$ is a graph automorphism.
\end{definition}

\begin{definition}[Graph automorphism]
An automorphism of a graph $G$ is a permutation $\sigma$ of the vertex set of $G$, such that the pair of vertices $(i,j)$ is adjacent (i.e., forms an edge) if and only if the pair $(\sigma(i),\sigma(j))$ is adjacent. 
\end{definition}


By definition, if $S_1\subseteq S_2$, then
\begin{align}
\Xi(S_1) \le \Xi(S_2).
\end{align}
This implies that, if $S$ contains no fewer than $(\alpha+1)$ vertices, then
\begin{align}
\Xi(S) \ge \Xi.
\end{align}
In addition, $\Xi\ge 1$, since there is at least one edge among any set of $(\alpha+1)$ vertices, given that $\alpha$ is the independence number.


\begin{theorem}
\label{th:1}
Given a graph $G$ with vertex set $V$, edge set $E$, independence number $\alpha$, and xi number $\Xi$, the following is a Bell inequality:
\begin{align}
\label{bbi}
I &= \sum_{i \in V} P(\Pi^A_i=\Pi^B_i=1) - \nonumber\\ 
&\ \ \sum_{(i,j) \in E}\! \frac{1}{2\Xi}\! \left[P(\Pi^A_i\!=\!\Pi^B_j\!=\!1)\! +\! P(\Pi^A_j\!=\!\Pi^B_i\!=\!1) \right]\nonumber\\
&\overset{\rm LHV}{\le} \alpha,
\end{align}
where $P(\Pi^A_i=\Pi^B_j=1)$ is the probability that Alice obtains the outcome $1$ and Bob obtains the outcome $1$ when Alice measures the observable $\Pi^A_i$ (with possible outcomes $0$ and $1$) and Bob measures the observable $\Pi^B_j$ (with possible outcomes $0$ and $1$).
\end{theorem}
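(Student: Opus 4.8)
The plan is to use the fact that $I$ is a linear functional of the joint probabilities, so that its maximum over the local polytope is attained at a deterministic local strategy; it therefore suffices to verify $I\le\alpha$ on such strategies. First I would fix a deterministic assignment in which Alice outputs $a_i\in\{0,1\}$ for vertex $i$ and Bob outputs $b_i\in\{0,1\}$, so that $P(\Pi^A_i=\Pi^B_j=1)=a_ib_j$. Writing $A=\{i:a_i=1\}$ and $B=\{i:b_i=1\}$, the functional becomes
\begin{equation*}
I = |A\cap B| - \frac{1}{2\Xi}\sum_{(i,j)\in E}\left(a_ib_j+a_jb_i\right),
\end{equation*}
and the task reduces to showing $I\le\alpha$ for every pair of subsets $A,B\subseteq V$.

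Next I would isolate the role of the overlap $W:=A\cap B$. For any edge $(i,j)$ with both endpoints in $W$ one has $a_i=a_j=b_i=b_j=1$, hence $a_ib_j+a_jb_i=2$; since every summand in the penalty is non-negative, edges with only one endpoint in $W$ can only enlarge the penalty and may be safely discarded. Keeping only the edges internal to $W$ gives
\begin{equation*}
I \le |W| - \frac{|E(W)|}{\Xi},
\end{equation*}
where $E(W)$ is the edge set of the subgraph induced by $W$. Everything has now been turned into a purely graph-theoretic inequality.

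The heart of the argument, and the step I expect to be the main obstacle, is to prove that $|E(W)|\ge \Xi\,(|W|-\alpha)$ whenever $|W|>\alpha$ (for $|W|\le\alpha$ the claim is immediate, as the penalty is non-negative). Here I would run a peeling argument anchored on the defining property of $\Xi$: by the monotonicity $\Xi(S_1)\le\Xi(S_2)$ for $S_1\subseteq S_2$ recorded before the theorem, any subset $S$ with $|S|\ge\alpha+1$ satisfies $\Xi(S)\ge\Xi$, so $S$ contains a vertex $v$ with at least $\Xi$ neighbours inside $S$. Starting from $W$ and repeatedly deleting such a vertex, each deletion destroys at least $\Xi$ induced edges, and these edges are all distinct across steps because the removed vertex never reappears. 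The process can be iterated as long as the current set has more than $\alpha$ vertices, i.e.\ for at least $|W|-\alpha$ steps, so at least $\Xi\,(|W|-\alpha)$ edges are destroyed, all of them lying in $E(W)$.

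Combining the two bounds yields $I\le |W|-(|W|-\alpha)=\alpha$, as required. The reduction to deterministic strategies and the rewriting in set language are routine; the only delicate points are verifying that $\Xi(S)\ge\Xi$ genuinely holds at every stage of the peeling (guaranteed by the stated monotonicity, since each current set still has at least $\alpha+1$ vertices) and confirming that the single-endpoint edges dropped at the start do not weaken the bound.
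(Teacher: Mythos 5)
Your proof is correct. The combinatorial engine is the same one the paper uses --- both arguments rest on the fact, recorded just before the theorem, that any set $S$ with $|S|\ge\alpha+1$ satisfies $\Xi(S)\ge\Xi$ and therefore contains a vertex with at least $\Xi$ neighbours inside $S$, and both remove such vertices one at a time --- but the packaging is genuinely different. The paper runs the removal as a perturbation of an \emph{optimal} deterministic strategy: any maximizer must have $S=\{i:a_i=b_i=1\}$ of size at least $\alpha$, and if $|S|\ge\alpha+1$, deleting the in-set max-degree vertex changes $I$ by $-1+\Xi(S)/\Xi\ge 0$, so some maximizer has $|S|=\alpha$, whence the bound. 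You instead bound \emph{every} deterministic point directly: discard the penalty from edges not internal to $W=A\cap B$, then prove the standalone counting lemma $|E(W)|\ge\Xi\,(|W|-\alpha)$ by peeling. The two arguments are arithmetically identical (each removal trades $-1$ in the positive part against at least $\Xi/\Xi=1$ in penalty), but your static formulation makes explicit two points the paper leaves implicit: that cross terms $a_ib_j$ with exactly one endpoint in $W$ only enlarge the penalty (the paper's claimed increment of exactly $-1+\Xi(S)/\Xi$ silently ignores them, though they only help), and that the edges destroyed in successive peeling steps are pairwise distinct. Your argument establishes only the upper bound $I\le\alpha$, which is all the stated inequality requires; the paper additionally observes attainability via an independent set of size $\alpha$, which is immediate.
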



\begin{proof}
To obtain the upper bound of $I$ for LHV models, we only need to consider deterministic probability assignments. From the definition of $I$, it is easy to see that the bound cannot be less than $\alpha$.
Therefore, the bound can only be obtained when the events $[\Pi^A_i=\Pi^B_i=1]$ have been assigned the value $1$ for any $i\in S$, where $S$ contains no fewer than $\alpha$ vertices. 

Let us assume that $S$ contains no fewer than $\alpha+1$ vertices and let us call $v$ the vertex in $S$ such that 
\begin{equation}
\left\vert \{u|u\sim v, u\in S\} \right\vert = \Xi(S).
\end{equation}
By changing the assignment of the event $[\Pi^A_v=\Pi^B_v=1]$ to be $0$, the increment of $I$ is $-1+\frac{\Xi(S)}{\Xi}$. This is because $\forall i,j \in S$,
\begin{align}
P(\Pi^A_i=\Pi^B_j=1) = P(\Pi^A_j=\Pi^B_i=1) = 1
\end{align}
with our current assignment, especially for $i=v$ or $j=v$. 

Therefore, in the case that $S$ contains no fewer than $(\alpha+1)$ vertices, we can always set the assignment of one event $[\Pi^A_v=\Pi^B_v=1]$ to be $0$, such that the value of $I$ does not decrease. This implies that the upper bound can be obtained in the case that $S$ contains exactly $\alpha$ vertices, which implies that the upper bound can be no more than $\alpha$. Consequently, the upper bound for LHV models is exactly $\alpha$.
\end{proof}


There is a second reason why the Bell inequalities \eqref{bbi} are interesting for us. The reason is that they allow us to establish a one-to-one connection between a quantum value for $I$ and another graph invariant of $G$. Moreover, this connection also gives us the initial state and the local observables that provide the quantum value for $I$.


\begin{definition}[Orthonormal representation]
An orthonormal representation in $\mathbb{C}^d$ of a graph $G$ with vertex set $V$ is an assignment of a nonzero unit vector $|v_i \rangle \in \mathbb{C}^d$ to each $i \in V$ satisfying that $\langle v_i | v_j \rangle=0$ for all pairs $i,j$ of adjacent vertices. Such an assignment does not require that different vertices are assigned different vectors, nor that nonadjacent vertices correspond to nonorthogonal vectors.
\end{definition}

An additional unit vector $|\psi \rangle \in \mathbb{C}^d$, called {\em handle}, is sometimes specified together with the orthonormal representation.

Notice that in many works in graph theory the usual definition of orthonormal representation assigns orthogonal vectors to nonadjacent ---instead of adjacent--- vertices.

\begin{definition}[Orthogonal rank]
The orthogonal rank \cite{HPRS17} of a graph $G$, denoted $\xi$, is the smallest positive integer $d$ for which there is an orthonormal representation in $\mathbb{C}^d$ of $G$.
\end{definition}

Quantum pure states are represented by rays. Therefore, $\xi$ is also the minimum dimension a quantum system must have so adjacent vertices in $G$ can be assigned orthogonal quantum states (or orthogonal rank-one projectors). However, it can be the case that the same ray is assigned to different vertices.

\begin{definition}[Graph of orthogonality]
Given a set of vectors $S$, the graph of orthogonality of $S$ is the graph in which each vector is represented by a vertex and two vertices are adjacent if and only if their corresponding vectors are mutually orthogonal.
\end{definition}


\begin{theorem}
\label{th:2}
For any graph $G$, the maximum quantum value of $I$, defined in Eq.~\eqref{bbi}, is
\begin{equation}
\label{eq:3}
Q \ge \frac{|V|}{\xi},
\end{equation}
where $|V|$ is the number of vertices of $G$ and $\xi$ is the orthogonal rank of $G$. The value $I=\frac{|V|}{\xi}$
is achieved by preparing the maximally entangled state 
\begin{equation}
\label{eq:state}
|\psi\rangle = \frac{1}{\sqrt{\xi}}\sum_{j=0}^{\xi-1} |j\rangle |j\rangle
\end{equation}
and using as local settings on Alice's side the observables represented by the projectors $|v_i\rangle \langle v_i| \otimes \openone$, with $|v_i\rangle$ in an 
orthonormal representation of dimension $\xi$ of $G$, and as local settings on Bob's side the observables represented by the projectors $\openone \otimes |v_i^{\ast}\rangle \langle v_i^{\ast}|$, where $|v_i^{\ast}\rangle$ is the complex conjugate of $|v_i\rangle$.
\end{theorem}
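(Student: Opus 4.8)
The plan is to prove this by direct computation: I will evaluate the Bell functional $I$ in Eq.~\eqref{bbi} on the explicitly proposed state and measurements and show that it equals exactly $|V|/\xi$, which immediately yields the asserted lower bound $Q \ge |V|/\xi$. Because the theorem only claims a lower bound on the maximum quantum value, no optimality or upper-bound argument is needed; exhibiting one realization that attains $|V|/\xi$ suffices. The existence of a dimension-$\xi$ orthonormal representation $\{|v_i\rangle\}$ of $G$ is guaranteed by the very definition of the orthogonal rank $\xi$, so the construction is always available.

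The central tool is the ricochet (transpose) identity for the maximally entangled state. For $|\psi\rangle = \frac{1}{\sqrt{\xi}}\sum_{j=0}^{\xi-1}|j\rangle|j\rangle$ and any operators $A,B$ on $\mathbb{C}^\xi$, expanding in the computational basis gives $\langle\psi|A\otimes B|\psi\rangle = \frac{1}{\xi}\tr(AB^T)$. Applying this with $A=|v_i\rangle\langle v_i|$ (Alice) and $B=|v_j^{\ast}\rangle\langle v_j^{\ast}|$ (Bob), and using that the transpose of a rank-one Hermitian projector satisfies $(|v_j^{\ast}\rangle\langle v_j^{\ast}|)^T = |v_j\rangle\langle v_j|$, I would obtain
\begin{equation}
P(\Pi^A_i=\Pi^B_j=1) = \frac{1}{\xi}\tr\big(|v_i\rangle\langle v_i|\,|v_j\rangle\langle v_j|\big) = \frac{1}{\xi}\,|\langle v_i|v_j\rangle|^2.
\end{equation}
This is precisely why Bob's settings are taken to be the complex conjugates $|v_i^{\ast}\rangle$: the transpose in the ricochet identity cancels the conjugation and turns the correlation into the squared overlap of the representation vectors, rather than an unwanted bilinear form.

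Next I would invoke the two defining properties of the orthonormal representation. Since each $|v_i\rangle$ is a unit vector, every diagonal term contributes $P(\Pi^A_i=\Pi^B_i=1)=1/\xi$. Since adjacent vertices are assigned orthogonal vectors, $\langle v_i|v_j\rangle = 0$ whenever $(i,j)\in E$, so each probability appearing in the edge sum of \eqref{bbi} vanishes. Substituting these values into the functional gives
\begin{equation}
I = \sum_{i\in V}\frac{1}{\xi} \;-\; \sum_{(i,j)\in E}\frac{1}{2\Xi}\big[0+0\big] = \frac{|V|}{\xi}.
\end{equation}

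The calculation is essentially routine once the ricochet identity is in hand; the only genuine subtlety—and the step I would treat most carefully—is the bookkeeping of the complex conjugation, namely confirming that combining $B=|v_j^{\ast}\rangle\langle v_j^{\ast}|$ with the transpose produces $|\langle v_i|v_j\rangle|^2$ exactly, so that the orthogonality of adjacent vertices kills the edge terms on the nose. With that verified, the first sum evaluates to $|V|/\xi$ and the edge sum to zero, completing the proof that the proposed realization attains $I=|V|/\xi$ and hence that $Q\ge |V|/\xi$.
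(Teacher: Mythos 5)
Your proposal is correct and takes essentially the same route as the paper: both proofs establish the bound by exhibiting the explicit realization (the maximally entangled state together with the projectors onto an orthonormal representation on Alice's side and its complex conjugate on Bob's side) and observing that it attains $I = |V|/\xi$. The paper states this attainment in a single sentence "by definition of $\xi$," while your ricochet-identity computation $P(\Pi^A_i=\Pi^B_j=1) = \frac{1}{\xi}\,|\langle v_i|v_j\rangle|^2$ simply supplies the details that the paper's proof leaves implicit, including the role of the complex conjugation on Bob's side.
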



\begin{proof}
By definition of $\xi$, the value $I=\frac{|V|}{\xi}$, can be achieved in quantum mechanics using the maximally entangled state \eqref{eq:state} and locally measuring the rank-one projectors corresponding to an orthonormal representation of dimension $\xi$ of $G$ in Alice's side and its complex conjugate in Bob's side.
\end{proof}


Theorems~\ref{th:1} and \ref{th:2} allow us to link an upper bound of the critical detection efficiency $\eta_{\rm crit}$ for the quantum violation of the Bell inequality \eqref{bbi} produced with maximally entangled states with invariants of the graph that originates the Bell inequality.


\begin{theorem}
\label{th:3}
For any Bell inequality of the form \eqref{bbi} associated to a graph $G$,
assuming that the number of runs of the Bell test is known (i.e., that we are in Case I in Sec.~\ref{ceta}) and that the parties adopt Strategy~II (described in Sec.~\ref{ceta}),
local models simulating the correlations produced by the state \eqref{eq:state} and the measurements described after Eq.~\eqref{eq:state} are impossible if the detection efficiency is
\begin{equation}
\label{eq:etac1}
\eta > \sqrt{\frac{\alpha}{|V|/\xi}} \ge \eta_{\rm crit},
\end{equation}
where $\alpha$, $|V|$, and $\xi$ are the independence number, the number of vertices, and the orthogonal rank of $G$, respectively.
\end{theorem}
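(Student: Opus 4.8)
The plan is to apply Strategy~II from Sec.~\ref{ceta} directly to the family \eqref{bbi}, exploiting the special structure of its functional. Recall from the discussion preceding Eq.~\eqref{eq:protocol1} that, under Strategy~II, the experimental value of any Bell functional at detection efficiency $\eta$ reads $I_{\rm exp} = \eta^2 Q + \eta(1-\eta)(Q_A+Q_B) + (1-\eta)^2 X$, where $Q_A$, $Q_B$, and $X$ are the contributions from the runs in which only Alice detects, only Bob detects, and neither detects, respectively, and the test is free of the detection loophole as soon as $I_{\rm exp} > C$. So the whole argument reduces to (i)~evaluating $Q_A$, $Q_B$, and $X$ for the functional \eqref{bbi} under a suitable assignment of the no-detection event, and (ii)~inserting the values of $C$ and $Q$ furnished by Theorems~\ref{th:1} and \ref{th:2}.

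The key observation I would use is that the functional \eqref{bbi} contains only joint probabilities of the form $P(\Pi^A_i=\Pi^B_j=1)$, i.e., terms in which both parties output $1$; there are no marginal terms. Hence I would have the parties adopt the assignment that maps every no-detection event to the outcome~$0$. With this choice, in any run in which at least one party fails to detect, that party never outputs $1$, so every term of $I$ evaluates to $0$. This immediately gives $Q_A = Q_B = 0$ (only one party detects) and $X = 0$ (neither detects). This is precisely the analogue, for \eqref{bbi}, of the feature of the CH functional highlighted in Strategy~III, so the check is a short structural verification rather than a computation. Moreover, because the no-detection outcome has been fixed to a definite value, the scenario collapses to an ordinary two-outcome Bell scenario, in which \eqref{bbi} is a valid Bell inequality with bound $\alpha$ for every LHV model---including those attempting to exploit the inefficiency---so a violation genuinely rules out all local models.

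With $Q_A = Q_B = X = 0$, only the coincidence term survives and $I_{\rm exp} = \eta^2 Q$. Setting $C = \alpha$ from Theorem~\ref{th:1} and using the achievable quantum value $Q = |V|/\xi$ from Theorem~\ref{th:2} for the state \eqref{eq:state} and the associated measurements, the loophole-free condition $I_{\rm exp} > C$ becomes $\eta^2\, |V|/\xi > \alpha$, that is, $\eta > \sqrt{\alpha/(|V|/\xi)}$. Since any $\eta$ above this value already precludes a local simulation, the critical efficiency obeys $\sqrt{\alpha/(|V|/\xi)} \ge \eta_{\rm crit}$, which is exactly \eqref{eq:etac1}.

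I expect the main obstacle to be conceptual rather than computational: making rigorous the claim that mapping no-detection to outcome~$0$ is legitimate and that, once it is made, \eqref{bbi} constrains every LHV model in the resulting scenario, so that the detection loophole is genuinely closed. The arithmetic---$Q_A=Q_B=X=0$ and $I_{\rm exp}=\eta^2 Q$---is then immediate. A secondary point I would state carefully is that we are using the \emph{guaranteed achievable} value $Q=|V|/\xi$ of Theorem~\ref{th:2}; since a larger $Q$ would only lower the right-hand side, the expression \eqref{eq:etac1} remains a valid upper bound on $\eta_{\rm crit}$.
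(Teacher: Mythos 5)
Your proposal is correct and follows essentially the same route as the paper's proof: assign no-detection to the outcome $0$ (the one absent from the Collins-Gisin form of \eqref{bbi}), observe that the absence of marginal terms forces $Q_A=Q_B=X=0$ so that $I_{\rm exp}=\eta^2 Q$, and then insert $C=\alpha$ and $Q=|V|/\xi$ from Theorems~\ref{th:1} and \ref{th:2} to get $\eta_{\rm crit}\le\sqrt{C/Q}$. The only difference is cosmetic: you spell out explicitly why the no-detection assignment is legitimate and why a larger achievable $Q$ only helps, points the paper delegates to Strategy~II and the $\ge$ in Theorem~\ref{th:2}.
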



\begin{proof}
Recall that the Collins-Gisin parametrization allows us to write any Bell expression as a linear combination of joint and marginal probabilities, without including one of the outcomes of each measurement. Then, a strategy in case of no detection is to associate the no-detection with the outcome $0$, which is assumed to be the one that does not appear explicitly in the Bell expression. 
Following this strategy, the probabilities in the Bell expression transforms as follows:
\begin{align}
    P(\Pi^A_i=1) \to \eta P(\Pi^A_i=1),\\
    P(\Pi^B_j=1) \to \eta P(\Pi^B_j=1),\\
    P(\Pi^A_i=\Pi^B_j=1) \to \eta^2 P(\Pi^A_i=\Pi^B_j=1).
\end{align}
If the Bell expression contains no marginal items, as it is the case in the Bell functional in \eqref{bbi}, then, in Eq.~\eqref{eq:protocol1}, $Q_A = Q_B = 0$. If no detection is associated to the outcome $0$, then $X = 0$.
Consequently, the quantum value in the ideal case becomes $\eta^2 Q$. Then, in this case,
\begin{equation}\label{eq:protocol2}
    \eta_{\rm crit} = \sqrt{\frac{C}{Q}},
\end{equation}
where $C$ is the upper bound of $I$ for LHV models.
Then, using Theorems \ref{th:1} and \ref{th:2}, we obtain Eq.~\eqref{eq:etac1}.
\end{proof}


\subsection{Examples of nonlocal correlations with low $\eta_{\rm crit}$}
\label{pnstates}


Here, we use Theorem~\ref{th:3} to identify quantum correlations and Bell inequalities with low $\eta_{\rm crit}$.


\subsubsection{Definitions}


\begin{definition}[Pauli observables]
The set $P_n$ of Pauli observables for a system of $n \ge 2$ qubits consists of the nontrivial quantum observables represented by $n$-term tensor products of the Pauli matrices $\sigma_x$, $\sigma_y$, $\sigma_z$, and $I$ (the $2 \times 2$ identity matrix).
\end{definition}

The cardinality of $P_n$ is $|P_n|=4^n-1$, since $P_n$ does not contain the $2^n\times 2^n$ identity matrix. 

\begin{definition}[Pauli states] \label{def:pauli_states}
The set ${\cal P}_n (\mathbb{C})$ of Pauli states for a system of $n \ge 2$ qubits consists of the common eigenstates of all the maximal subsets of $P_n$ containing only mutually compatible observables (i.e., represented by mutually commuting matrices) of maximal size. 
\end{definition}

The Pauli states are also called the ``quantum states arising from the Pauli group''~\cite{Planat13}. The eigenvectors of each subset of maximal size of $P_n$ containing only mutually compatible observables provide an unique orthogonal basis of vectors with $d=2^n$ vectors. There are $L=\prod_{j=1}^n (2^j + 1)$ such subsets, and ${\cal P}_n (\mathbb{C})$ is the union of the $L$ disjoint orthogonal bases. Accordingly, $|{\cal P}_n (\mathbb{C})|= L d$.

Hereafter, we denote by ${\cal P}_n (\mathbb{R})$ the subset of ${\cal P}_n (\mathbb{C})$ represented by vectors with all components in $\mathbb{R}$. $|{\cal P}_n (\mathbb{R})|=\prod_{j=1}^n (2^j + 2)$. 

\begin{definition}[Newman states]
The set ${\cal N}_d$ of Newman states for a quantum system of dimension $d$, where $d=4k, k \in \mathbb{N}$, consists of the states represented by $d$-dimensional rays with components $-1$ and $1$ and such that the number of $-1$ components is even.
\end{definition}

For example, 
${\cal N}_4 =\{(1,1,1,1)$, $(1,1,-1,-1)$,
$(1,-1,1,-1)$, 
$(1,-1,-1,1)\}$. We remark that $(a,b,c,d)$ and $-1\times (a,b,c,d)$ represent the same state. 
As it can be seen, $|{\cal N}_d|=2^{d-2}$.

\begin{definition}[Newman graphs]
The Newman graph $Y_d$ is the graph of orthogonality of ${\cal N}_d$, where $d=4k, k \in \mathbb{N}$.
\end{definition}

The name follows from a family of graphs studied by Newman (see Sec.~6.6 of \cite{Newman2004}).

\begin{definition}[Lov\'asz's number]
The Lov\'asz number of a graph $G$ is \cite{Lovasz79,GLS86}
\begin{equation}
\vartheta(G):= \max \sum_{i\in V} | \langle \psi | v_i \rangle |^2,
\label{Lovasz}
\end{equation}
where the maximum is taken over all orthonormal representations $\{| v_i \rangle \}_{i \in V}$ of $G$ and handles $| \psi \rangle$ (i.e., normalized vectors) in any dimension.
\end{definition}

By the definition of $\vartheta(G)$, there is a quantum realization that achieves $I=\vartheta(G)$ for the $I$ associated to $G$ using Eq.~\eqref{bbi}.

\begin{definition}[Fractional packing number]
The fractional packing number of a graph $G$ \cite{Lovasz79,GLS86,Shannon56} is
\begin{equation}
\alpha^{\ast}(G):=\max \sum_{i \in V} p_i,
\label{alphastar}
\end{equation}
where the maximum is taken over all $p_i \geq 0$ and for all cliques $C$ of $G$, under the restriction $\sum_{i \in C} p_i \leq 1$.
\end{definition}

It will be useful that $\vartheta(G) \le \alpha^{\ast}(G)$.

\begin{definition}[Hadamard or orthogonality graphs Omega]
For $n \in \mathbb{N}$, an orthogonality graph $\Omega_n = (V,E)$ is the graph with vertex set $V=\{-1,1\}^n$ and edge set $E=\{(u,v) \in V \times V: \langle u,v \rangle = 0 \}$. That is, each vertex is assigned a $\pm 1$-vector of length $n$, and two vertices are adjacent if and only if the corresponding vectors are orthogonal.
\end{definition}

Geometrically, the vectors assigned to the vertices of the Hadamard graph $\Omega_n$ correspond to the directions of the vertices of an $n$-dimensional hypercube centered at the origin. Newman states may be seen as a subset of such hypercube directions. Therefore, a Newman graph $Y_n$ is an induced subgraph of an orthogonality graph $\Omega_n$.

\begin{definition}[Induced subgraph]
Given a graph $G$ with vertex set $V$ and edge set $E$, and a subset of vertices $S \subset V$, the subgraph of $G$ induced by $S$ is the graph with vertex set $S$, and edge set consisting of all the edges $(u,v) \in E$ such that $u,v \in S$ \cite{Diestel17}.
\end{definition}

\begin{definition}[Hadamard matrix]
A Hadamard matrix of order $n$ is a real $n \times n$ square matrix $H_n$ in which all its entries are either $+1$ or $-1$, and whose rows are mutually orthogonal.
\end{definition}

The order $n$ of a Hadamard matrix must be $1$, $2$, or a multiple of $4$. Therefore, if $n$ is an even number, each pair of rows in a Hadamard matrix represents two mutually orthogonal $\pm 1$-vectors in dimension $n$. The same is true for its columns considered as $\pm 1$-vectors. Therefore, taking any pair of rows (alternatively, columns), the number of matching entries must be equal to the number of mismatching entries, exactly $n/2$.

\begin{definition}[Lexicographic product of graphs]
The lexicographic product of two graphs $G$ and $H$ with respective vertex sets $V(G)=\{u_i\}_i$ and $V(H)=\{v_k\}_k$ is a graph $G[H]$ such that its vertex set is the Cartesian product $V(G[H]) = V(G) \times V(H)$, and any two vertices $(u_i,v_k)$ and $(u_j,v_l)$ in $G[H]$ are adjacent if and only if either $u_i$ is adjacent with $u_j$ in $G$ or $u_i = u_j$ and $v_k$ is adjacent with $v_l$ in $H$. 
\end{definition}

The lexicographic product is associative but not commutative (a fact emphasized by the notation).


\subsubsection{Pauli-4320}


The graph of orthogonality of ${\cal P}_4 (\mathbb{R})$ has $\alpha = 72$ and $\vartheta=\alpha^{\ast}=\frac{|V|}{\xi}=270$. Therefore, by preparing the maximally entangled state \eqref{eq:state} of local dimension $\xi=2^4=16$ and allowing the parties to choose between the $4320$ two-outcome measurements represented by $|v_i\rangle \langle v_i|$, with $|v_i\rangle \in {\cal P}_4 (\mathbb{R})$, they produce a violation of the Bell inequality \eqref{bbi} which, using Theorem \ref{th:3}, allows us to conclude that
\begin{equation}
\label{eq:4320}
\eta_{\rm crit}^{{\cal P}_4 (\mathbb{R})} \le 0.516,
\end{equation}
which is an unprecedentedly low upper bound for this dimension (see Sec.~\ref{sec:pw}).

Notice that $4320$ local choices are not too many for a realistic Bell test. For example, a photonic loophole-free Bell test may have $3502784150$ trials \cite{GVW15}, which is enough for a Bell test in which each party has to chose between $4320$ settings, as it gives $187.7$ trials for each possible combination of settings $(x,y)$, which is more than three times the number of trials per $(x,y)$ in the first loophole-free Bell test \cite{HBD15}. Recall that all measurements have two outcomes, as in the test of the Clauser-Horne-Shimony-Holt Bell inequality \cite{CHSH69}. Therefore, only two detectors per party are necessary.


\subsubsection{Pauli-36720}


We conjecture that the graph of orthogonality of ${\cal P}_4 (\mathbb{C})$ has $\alpha = 396$. This conjecture is based on the fact that, after months of computations, $396$ is the largest value that we have found (and we have found it many times, which suggests that our search is sufficiently dense). The computation is based on a greedy-type algorithm taking into account the symmetry of the graph as well as known upper bounds for the independence number by means of spectral graph theory. Given a graph $G$, we proceed as follows. $(i)$~Compute the automorphism group of $G$ and the corresponding orbits. These orbits will yield a partition of the vertex set $\lbrace 1,\ldots, \vert V(G) \vert \rbrace$. $(ii)$~From each orbit $O_{k}$ we select a vertex $v^{k} \in O_{k}$. This vertex then has two options: It can either be part of a maximal independent vertex set or it is not, i.e., $v^{k} \in \mathcal{I} (G)$ or $v^{k} \notin \mathcal{I} (G)$. If $v^{k} \in \mathcal{I}$, then remove $v^{k}$ and all neighbours of $v^{k}$. This produces a tuple $(G^{1},1)$ containing a new graph ${G}^1$ and $1$, as we have removed a vertex from the independent set of the original graph. On the other hand, if $v^{k} \notin \mathcal{I}(G)$, we can remove the whole orbit $O_{k}$ of $G$ with $v^{k} \in O_{k}$ what yields another graph $G^{2}$. As we have not removed a member of $\mathcal{I}$, we store the tuple $(G^{2},0)$. In total, this produces a sequence of graphs with a strictly decreasing number of vertices. Once the size of all graphs is lower than a threshold $\kappa$ for which $\alpha$ can be computed directly, the decomposition stops yielding a set $\lbrace (G^{k}, H^{k}) \rbrace$, where $H^{k} \in \lbrace (0,1)^{n}\rbrace$ denotes the choice in each step. The independence number of $G$ is given by the maximum over $\Tilde{\alpha}_{k} = \alpha(G_{k}) + \sum_{j} (H^{k})_{j}$. However, the problem with this approach is (revealing the hardness of the problem of computing $\alpha$) that the number of graphs in the decomposition grows exponentially. If one has a sufficiently high lower bound for $\alpha(G)$ given a priori, one only needs to collect those graphs appearing in the decomposition process whose independence number is larger than this a priori bound.

In addition, $\vartheta=\alpha^{\ast}=\frac{|V|}{\xi}=2295$. Therefore, if the above conjecture is correct, then, by preparing the maximally entangled state \eqref{eq:state} of local dimension $\xi=2^4=16$ and allowing the parties to choose between the $36720$ two-outcome measurements represented by $|v_i\rangle \langle v_i|$, with $|v_i\rangle \in {\cal P}_4 (\mathbb{C})$, they can produce a violation of the Bell inequality \eqref{bbi} which, using Theorem \ref{th:3}, allows us to conclude that
\begin{equation}
\eta_{\rm crit}^{{\cal P}_4 (\mathbb{C})}\le0.415.
\end{equation}


\subsubsection{Newman-$2^{26}$}


As it is proven in Sec.~\ref{sec:proofs}, the graph of orthogonality of ${\cal N}_{28}$ has $\alpha = 397594$ and $\vartheta=\alpha^{\ast}=\frac{|V|}{\xi}=\frac{16777216}{7} \approx 2.3967 \times 10^6$. Therefore, by preparing the maximally entangled state \eqref{eq:state} of local dimension $\xi=28$ and allowing the parties to choose between the $2^{26}$ two-outcome measurements represented by $|v_i\rangle \langle v_i|$, with $|v_i\rangle \in {\cal N}_{28}$, they can produce a violation of the Bell inequality \eqref{bbi} which, using Theorem \ref{th:3}, allows us to conclude that
\begin{equation}
\eta_{\rm crit}^{{\cal N}_{28}}\le 0.407.
\end{equation}

Arguably, $2^{26}$ two-outcome measurements are too many for a real Bell test. The aim of this and the next example is to show that, by digging in the literature, one can find sets of vectors (or graphs), leading to gedanken Bell tests with very low $\eta_{\rm crit}$.


\subsubsection{Newman-$2^{30}$}


As it is proven in Sec.~\ref{sec:proofs}, the graph of orthogonality of ${\cal N}_{32}$ has $\alpha = 3572224$ and $\vartheta=\alpha^{\ast}=\frac{|V|}{\xi}=2^{25}$. Therefore, by preparing the maximally entangled state \eqref{eq:state} of local dimension $\xi=32$ and allowing the parties to choose between the $2^{30}$ two-outcome measurements represented by $|v_i\rangle \langle v_i|$, with $|v_i\rangle \in {\cal N}_{32}$, they produce a violation of the Bell inequality \eqref{bbi} which, using Theorem \ref{th:3}, allows us to conclude that
\begin{equation}
\eta_{\rm crit}^{{\cal N}_{32}} \le 0.326.
\end{equation}


\subsubsection{Independence number and quantum value for the Newman graphs}
\label{sec:proofs}


The independence number of the Newman graph $Y_n$ can be obtained by exploiting a connection \cite{Newman2004} between $Y_n$ and the orthogonality graphs $\Omega_n$, also known as Hadamard graphs or Deutch-Jozsa graphs. The graphs $\Omega_n$ were introduced by Ito \cite{Ito85a,Ito85b} as a tool to provide an algebraic graph theoretic background for Hadamard matrices. Hadamard graphs appear in relation to some quantum communication protocols and some proofs of the Kochen-Specker theorem \cite{BCT99,GTW2013,AHKS2005,CMNSW2007,SS2012}.

By definition, the graph $Y_n$ is a subgraph of $\Omega_n$ induced by a specific subset of its vertices.

For our purposes, the only interesting graphs $\Omega_n$ are those for which $n=4k, k \in \mathbb{N}$. Otherwise, $ \Omega_n$ is empty for $n$ odd, or bipartite for $n = 2 \mod 4$ \cite{Newman2004}. Restricting ourselves to such interesting graphs $\Omega_{n=4k}$, the first observation is that $\Omega_n$ is the disjoint union of two isomorphic graphs, 
\begin{equation}
\Omega_n = \Omega_n^{\rm e} \sqcup \Omega_n^{\rm o},
\end{equation}
where $\Omega_n^{\rm e}$ is the graph defined by the vertices corresponding to vectors with an even number of components $1$, and $\Omega_n^{\rm o}$ is the graph defined by the vertices corresponding to vectors with an odd number of components $1$.
Therefore, the independence numbers are related as follows:
\begin{equation}
\label{al1}
\alpha(\Omega_n) = \alpha(\Omega_n^{\rm e})+\alpha(\Omega_n^{\rm o}) = 2 \alpha(\Omega_n^{\rm e})
\end{equation}
and the orthogonal ranks are related as follows:
\begin{equation}
\xi(\Omega_n) = \xi(\Omega_n^{\rm e}) = \xi(\Omega_n^{\rm o}).
\end{equation}

The second step is noticing that $\Omega_n^{\rm e}$ is the lexicographic product of $Y_n$ with the complement of the complete graph on two vertices,
\begin{equation}
\Omega_n^{\rm e} = Y_n[\bar{K}_2].
\end{equation}
Therefore (see Theorem~\ref{th:ar}),
\begin{equation}
\label{al2}
\alpha(\Omega_n^{\rm e}) = \alpha(Y_n) \alpha(\bar{K}_2) = 2 \alpha(Y_n)
\end{equation}
and 
\begin{equation}
\xi(\Omega_n^{\rm e}) = \xi(Y_n).
\end{equation}

The orthogonal rank of $\Omega_n$ is $n$ \cite{MR2016,WE2018}. Therefore,
\begin{equation}
\xi(Y_n)=n,
\end{equation}
and the assignment of $n$-dimensional rays with components $-1$ and $1$ to the vertices $Y_n$ such that adjacent vertices are assigned orthogonal rays yields an orthogonal representation of $Y_n$ of minimum dimension. 

On the other hand, $\alpha(\Omega_n)$ is known for $n = 4p^k$, for $k \ge 1$ where $p$ is an odd prime \cite{Frankl86}, and also for $n=2^k$ for $k \ge 2$ \cite{IT2019}. In both cases,
\begin{equation}\label{alphaom}
\alpha(\Omega_n) = 4 \sum_{i=0}^{n/4-1}\binom{n-1}{i}.
\end{equation}
It still remains a conjecture whether Eq.~\eqref{alphaom} is valid when $n$ is another multiple of $4$. To our knowledge, the first open case is $n=40$ \cite{IT2019}.
Taking Eqs.~\eqref{al1}, \eqref{al2}, and \eqref{alphaom} into account,
\begin{equation}
\alpha(Y_{28}) = 397594
\end{equation}
and
\begin{equation}
\alpha(Y_{32}) = 3572224.
\end{equation}

In addition,
$Y_n$ has $|V|=2^{n-2}$ vertices and $|E|= 2^{n-4} \binom{n}{n/2}$ edges.

Let us show that, for the two considered Newman graphs, the orthogonal rank (i.e., the minimal dimension of the physical realization) equals their clique number (size of the largest clique).

In order to prove that Newman's graphs $Y_{28}$ and $Y_{32}$ contain cliques of size $28$ and $32$, respectively, note that such cliques correspond to sets of pairwise orthogonal $\pm 1$-rays of cardinality $28$ in dimension $28$, or cardinality $32$ in dimension $32$, in which the number of $-1$ components is an even (alternatively, odd) number. This fact allows us to rephrase this problem in a slightly different and more convenient way, using Hadamard matrices: our goal is to construct adequate Hadamard matrices $H_n$ of orders $n=28$ and $n=32$. Each row in $H_n$ is a $\pm 1$-vector in dimension $n$ and, by definition, the $n$ rows in $H_n$ constitute a set of $n$ pairwise orthogonal $\pm 1$-vectors in dimension $n$. In fact, these vectors are rays since no two rows can have the same entries with opposite signs, due to orthogonality. If necessary, we can transform $H_n$ into another equivalent $n \times n$ Hadamard matrix by negating rows or columns, or by interchanging rows or columns, so that the number of $-1$ components of the row vectors is an even (alternatively, odd) number. Notice that, in the end, the resulting set of vectors corresponds to a maximum clique of size $n$ in the Newman graph $Y_n$.

According to Hadamard's conjecture, a Hadamard matrix $H_n$ of order $n=4k$ exists for every positive integer $k$. At the present time, after applying the construction methods due to Sylvester, Paley, Williamson and others, the smallest order for which no Hadamard matrix is known is $n=668$. And there are many orders $n>668$ for which $H_n$ is known. This means that all Newman graphs $\Omega_{n=4k}$ with $n<668$ satisfy that $\omega(Y_n)=n$.

There is a well known recursive procedure to construct Hadamard matrices $H_n$ of order $n= 2^k, k \in \mathbb{N}$, the so called Sylvester's construction \cite{Sylvester1867}. Applying this procedure, $H_{32}$ can be obtained. This matrix fulfills the condition that the number of $-1$ entries in each row is an even number, hence providing a clique of size $32$ for the Newman graph $Y_{32}$.

Specifically, from $H_{32}$, we arrive at the following clique of size $32$: the set of rays of the form $u_i \otimes u_j \otimes v_k$, where $u_i, u_j \in \{(1,1,1,1), (1,1,-1,-1), (1,-1,1,-1)$, $(1,-1,-1,1)\}$, $v_k \in \{(1,1),(1,-1)\}$, and $\otimes$ denotes tensor product.

A Hadamard matrix $H_{28}$ is more convoluted. It can be obtained through the so-called Paley's construction (Lemma~2 in Ref.~\cite{Paley33}). There are $487$ inequivalent matrices $H_{28}$. Examples of them can be found in the literature. To exhibit a specific instance of a clique of size $28$ induced in $Y_{28}$ we look for a matrix $H_{28}$ such that the number of $-1$ entries in each row is again an even number. Such a matrix (using $0,1$ entries instead of $\pm 1$) can be found, v.g., in Fig.~1 in Ref.~\cite{Kimura94}: The set of row vectors obtained by replacing therein each $0$ entry with $-1$ constitutes the desired clique.

Finally, we will prove that for Newman graphs $Y_n$ with $n=28$ and $n=32$, the quantum value of $I$ given by Eq.~\eqref{bbi} can be $\alpha^{\ast}(Y_n)=\vartheta(Y_n)=\frac{|V(Y_n)|}{\xi(Y_n)}$. First, note that both $\Omega_n^{\rm e}$ and $\bar{K}_2$ are vertex-transitive. Given that $\Omega_n^{\rm e} = Y_n[\bar{K}_2]$, we know that $Y_n$ is also vertex-transitive, since the lexicographic product of two graphs is vertex-transitive if and only if both graph factors are vertex-transitive \cite{B80}.

On one hand, it is known \cite{MR2016} that
\begin{equation}
\vartheta(\Omega_n)=\frac{2^n}{n}.
\end{equation}
Since $\vartheta$ is multiplicative in the lexicographic product (see Theorem~\ref{th:ar}), we have $\vartheta(\Omega^{\rm e}) = \vartheta(Y_n) \times \vartheta(\bar{K}_2) = 2\, \vartheta(Y_n)$. Notice that $\vartheta(\Omega^{\rm e}) = \frac{\vartheta(\Omega_n)}{2}$, because $\Omega_n = \Omega^{\rm e} \sqcup \Omega^{\rm o}$. As a consequence,
\begin{equation}
\vartheta(Y_n)=\frac{\vartheta(\Omega_n)}{4}=\frac{2^{n-2}}{n}.
\end{equation}

On the other hand, the fractional packing number in a vertex-transitive graph $G$ satisfies $\alpha^{\ast}(G)=\frac{|V(G)|}{\omega(G)}$, where $|V(G)|$ is the number of vertices and $\omega(G)$ is the clique number of $G$. Given that $|Y_n|=2^{n-2}$ and knowing that the clique number for $Y_{28}$ and $Y_{32}$ is $\omega(Y_{28})=28$ and $\omega(Y_{32})=32$ (as proved before), and by vertex-transitivity, we obtain that the quantum values of $I$ can be
\begin{equation}
\alpha^{\ast}(Y_{28})=\vartheta(Y_{28})
\end{equation}
and
\begin{equation}
\alpha^{\ast}(Y_{32})=\vartheta(Y_{32}),
\end{equation}
respectively.


\subsubsection{Newman-$2^{26}$ and Newman-$2^{30}$ are state-independent contextuality sets}
\label{sec:sicn}


\begin{definition}[SI-C set]
A State-independent contextuality (SI-C) set \cite{CKB15} in dimension $d \ge 3$ is a set of projectors that produces noncontextual correlations (i.e. that violate some noncontextualty inequality) for any initial quantum state of dimension $d$.
\end{definition}

SI-C sets play a fundamental role in our method for identifying correlations with low $\eta_{\rm crit}$, as any SI-C set produces a quantum violation of a graph-based Bell inequality of the form \eqref{bbi}. However, there are sets that are not SI-C sets and produce a quantum violation of a Bell inequality of the form \eqref{bbi} \cite{CKB15}. 

\begin{theorem}
\label{th:sicn}
Newman-$2^{26}$ is a SI-C set in dimension $d=28$ and Newman-$2^{30}$ is a SI-C set in dimension $d=32$.
\end{theorem}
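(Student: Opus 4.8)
The plan is to reduce the state-independent contextuality (SI-C) property to a single spectral inequality and then verify it using the graph invariants already computed for the Newman graphs in Sec.~\ref{sec:proofs}. By the necessary-and-sufficient condition of Ref.~\cite{CKB15}, a set of rank-one projectors $\{|v_i\rangle\langle v_i|\}_{i\in V}$ whose orthogonality graph is $G$ is an SI-C set in dimension $d$ precisely when the graph-based noncontextuality inequality $\sum_{i\in V} P(\Pi_i=1)\le\alpha(G)$ is violated by \emph{every} state, i.e. when
\begin{equation}
\lambda_{\min}(\Sigma) > \alpha(G), \qquad \Sigma := \sum_{i\in V} |v_i\rangle\langle v_i|,
\end{equation}
because $\min_{\rho}\tr(\rho\,\Sigma)$, being the minimum of a linear functional over the state space, equals the smallest eigenvalue of $\Sigma$. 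Here $G=Y_d$ for $d\in\{28,32\}$, and the values $\alpha(Y_{28})=397594$ and $\alpha(Y_{32})=3572224$ are already known. So everything reduces to computing $\lambda_{\min}(\Sigma)$.

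The crucial step is to show that $\Sigma$ is a multiple of the identity, namely $\Sigma=\frac{|V|}{\xi}\openone=\vartheta(Y_d)\openone$. I would establish this in one of two equivalent ways. The direct route writes each Newman vector as $|v_i\rangle=\tfrac{1}{\sqrt d}\,w_i$ with $w_i\in\{\pm1\}^d$ having an even number of $-1$ entries. The diagonal entries of $\sum_i w_i w_i^{\mathsf T}$ are all equal to $|{\cal N}_d|=2^{d-2}$, while an off-diagonal entry $(k,l)$ equals $\sum_i (w_i)_k (w_i)_l$; flipping the signs of coordinate $k$ and of a third coordinate $m\neq k,l$ is a sign change at two positions, hence preserves the even-parity set and furnishes an involution that pairs the vectors with $(w_i)_k(w_i)_l=+1$ bijectively with those having $(w_i)_k(w_i)_l=-1$, so every off-diagonal sum vanishes. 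Thus $\sum_i w_i w_i^{\mathsf T}=2^{d-2}\openone$ and $\Sigma=\frac{2^{d-2}}{d}\openone$. The more conceptual route observes that the group generated by coordinate permutations and even-weight sign flips acts orthogonally on $\mathbb{R}^d$, permutes ${\cal N}_d$, and acts irreducibly (no nonzero proper subspace is invariant under both the symmetric group and a single pair sign-flip); since $\Sigma$ commutes with this action, Schur's lemma forces $\Sigma$ to be scalar, with scalar value $\tr(\Sigma)/d=|V|/d=|V|/\xi$. Either way, $\lambda_{\min}(\Sigma)=\frac{2^{d-2}}{d}=\vartheta(Y_d)$.

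It then remains only to check $\vartheta(Y_d)>\alpha(Y_d)$. For $d=28$, $\vartheta(Y_{28})=2^{26}/28\approx2.397\times10^{6}>397594$, and for $d=32$, $\vartheta(Y_{32})=2^{25}=33554432>3572224$, so the inequality holds in both cases. Equivalently, $\vartheta(Y_d)>\alpha(Y_d)$ is simply the statement $\eta_{\rm crit}<1$ already established for these graphs via Eq.~\eqref{eq:etac1}. Hence the noncontextuality inequality is violated by every input state, and both Newman sets are SI-C in the stated dimensions.

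The main obstacle is the middle step: proving that $\Sigma$ is \emph{exactly} proportional to the identity. This is what lets the generally strictly stronger SI-C condition $\lambda_{\min}(\Sigma)>\alpha$ collapse to the purely combinatorial comparison $\vartheta>\alpha$. Vertex-transitivity alone, already noted for $Y_d$, only equalizes the diagonal of $\Sigma$; the genuinely new input is the vanishing of all off-diagonal sums, which I would secure either by the explicit sign-flip bijection above or by upgrading vertex-transitivity to irreducibility of the full symmetry group and invoking Schur's lemma.
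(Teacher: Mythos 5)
Your proof is correct, and it reaches the paper's central fact---that the normalized Newman projectors sum to $\frac{2^{d-2}}{d}\,\mathbb{I}_d$---by a genuinely different argument. The paper proves this via Lemma~\ref{lemma:sicn}, an induction on the dimension that establishes \emph{two} statements simultaneously: $\sum_{v\in\Omega_n^{\rm e}}|v\rangle\langle v| = 2^{n-1}\,\mathbb{I}_n$ and $\sum_{v\in\Omega_n^{\rm e}}\langle v| = \langle 0|_n$; the auxiliary vector-sum identity is precisely what kills the off-diagonal blocks in the induction step from $n=k$ to $n=k+1$, and the Newman statement then follows by keeping one representative per ray and normalizing. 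Your sign-flip involution (negate coordinates $k$ and $m\notin\{k,l\}$; this preserves even parity, is well defined and fixed-point free on rays, and negates the ray-invariant product $(w)_k(w)_l$) kills each off-diagonal entry directly, with no induction and no auxiliary identity, so it is arguably more elementary and self-contained. Your Schur's-lemma variant is also sound, with the caveat that for a \emph{real} irreducible orthogonal action one should argue via the eigenspaces of the symmetric matrix $\Sigma$ (they are group-invariant, hence trivial or full), rather than invoke scalarity of the whole commutant, which can fail over $\mathbb{R}$. Both your route and the paper's finish identically: $2^{26}/28\approx 2.4\times 10^{6} > 397594$ and $2^{25}=33554432 > 3572224$. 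One small correction: the unweighted spectral condition $\lambda_{\min}(\Sigma)>\alpha(G)$ is not the necessary-and-sufficient criterion of Ref.~\cite{CKB15}; that criterion involves vertex weights, $\lambda_{\min}\left(\sum_i w_i\Pi_i\right)>\alpha(G,w)$, and the unweighted version is only sufficient. Since your argument uses only sufficiency (every state then violates $\sum_{i}P(\Pi_i=1)\le\alpha$), the proof stands---as you yourself acknowledge when calling the condition ``generally strictly stronger.''
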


In order to prove Theorem~\ref{th:sicn}, we will first state and prove the following lemma, in which in a mild abuse of notation, we will use $\Omega_{n}^{\rm e}$ and $\Omega_{k}^{\rm o}$ to refer not only to the Hadamard graphs but also to the sets of vectors constituting their orthogonal representations:

\begin{lemma}
\label{lemma:sicn}
For $n\ge 3$,
\begin{equation}
\sum_{\langle v| \in \Omega_n^{\rm e}} |v\rangle\langle v| = \sum_{\langle v| \in \Omega_n^{\rm o}} |v\rangle\langle v| = 2^{n-1} \mathbb{I}_n,
\end{equation}
\begin{equation}
\sum_{\langle v| \in \Omega_n^{\rm e}} \langle v| = \sum_{\langle v| \in \Omega_n^{\rm o}} \langle v| = \langle 0|_n,
\end{equation}
where $\langle 0|_n = (0,\ldots,0)$ and all the vectors $\langle v|$'s are unnormalized.
\end{lemma}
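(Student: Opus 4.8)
The plan is to evaluate both identities by direct computation on the $\pm 1$ vectors, treating the outer-product sum entrywise and the vector sum componentwise, and to make the cross terms vanish through a parity-preserving sign-flip involution rather than a brute-force count. First I would unpack the abuse of notation: $\Omega_n^{\rm e}$ (resp.\ $\Omega_n^{\rm o}$) is the set of vectors $v \in \{-1,1\}^n$ whose number of components equal to $1$ is even (resp.\ odd), each regarded as an unnormalized column vector $|v\rangle$ with $\langle v|v\rangle = n$. The basic cardinality input is $|\Omega_n^{\rm e}| = |\Omega_n^{\rm o}| = 2^{n-1}$, which follows from $\sum_{k\ \text{even}} \binom{n}{k} = \sum_{k\ \text{odd}} \binom{n}{k} = 2^{n-1}$ (equivalently, from evaluating $(1+1)^n \pm (1-1)^n$).

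For the first identity I would examine the matrix entry $\big(\sum_{v} |v\rangle\langle v|\big)_{ij} = \sum_{v} v_i v_j$, with $v$ ranging over $\Omega_n^{\rm e}$. On the diagonal $i=j$ one has $v_i^2 = 1$, so the entry equals $|\Omega_n^{\rm e}| = 2^{n-1}$, matching the right-hand side. For an off-diagonal entry $i \ne j$, I would introduce the involution $\tau$ on $\Omega_n^{\rm e}$ that flips the signs of coordinates $i$ and $k$ for some fixed $k \notin \{i,j\}$. Flipping two coordinates preserves the parity of the number of $1$'s, so $\tau$ maps $\Omega_n^{\rm e}$ bijectively to itself and has no fixed point (no $\pm 1$ vector equals its own sign flip). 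Since $\tau$ sends $v_i v_j \mapsto (-v_i)(v_j) = -v_i v_j$, the contributions of $v$ and $\tau(v)$ cancel in pairs, so the off-diagonal entry is $0$. This is exactly where the hypothesis $n \ge 3$ enters: a third coordinate $k$ distinct from both $i$ and $j$ must exist (and indeed the identity already fails at $n=2$).

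For the second identity I would compute the $i$-th component $\big(\sum_{v} \langle v|\big)_i = \sum_{v} v_i$ and apply the same device, now flipping coordinates $i$ and any $k \ne i$; this preserves $\Omega_n^{\rm e}$, sends $v_i \mapsto -v_i$, and again pairs terms so that each component vanishes, yielding the zero row vector $\langle 0|_n$. The odd cases are handled verbatim, since flipping two coordinates preserves odd parity just as well, so both involutions restrict to $\Omega_n^{\rm o}$ without change. The only genuine obstacle here is bookkeeping — keeping straight which parity class each sign flip preserves and confirming that the pairing involutions have no fixed points — together with flagging the mild edge condition $n \ge 3$, which is needed only for the off-diagonal cross terms of the first identity.
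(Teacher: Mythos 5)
Your proof is correct, but it takes a genuinely different route from the paper's. The paper proceeds by induction on $n$: it verifies the claims at $n=3$, then decomposes $\Omega_{k+1}^{\rm e}$ and $\Omega_{k+1}^{\rm o}$ as copies of $\Omega_{k}^{\rm e}$ and $\Omega_{k}^{\rm o}$ with a $\pm 1$ component appended, and propagates both identities through a block-matrix computation (the off-diagonal blocks of the outer-product sum vanish precisely because of the vector-sum identity one dimension down, so the two statements feed each other in the induction). Your argument instead computes each matrix entry $\sum_v v_i v_j$ and each component $\sum_v v_i$ directly, killing the off-diagonal and marginal terms with a fixed-point-free, parity-preserving two-coordinate sign-flip involution. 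What your approach buys: it is self-contained and non-inductive, it isolates exactly where the hypothesis $n\ge 3$ is used (the existence of a third coordinate $k\notin\{i,j\}$, with the counterexample at $n=2$), and it shows as a byproduct that the vector-sum identity already holds for $n\ge 2$. What the paper's approach buys: its recursive decomposition of $\Omega_{n}^{\rm e,o}$ mirrors the Sylvester-type doubling structure used elsewhere in the same section to build Hadamard matrices and cliques, so the induction fits the surrounding machinery, at the cost of a hand-checked base case and a slightly heavier bookkeeping of which parity class receives which appended sign. Both proofs rely on the same cardinality fact $\vert\Omega_n^{\rm e}\vert=\vert\Omega_n^{\rm o}\vert=2^{n-1}$; yours derives it from the binomial identity, the paper's uses it inside the inductive step.
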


\begin{proof}
We prove the lemma by induction. It's straightforward to verify that those claims hold for $n=3$. Let us assume now that they also hold for $n=k$. Notice that, by adding an adequate extra component $\pm 1$ to the vectors of $\Omega_{k}^{\rm e}$ and $\Omega_{k}^{\rm o}$, we obtain orthogonal representations $\Omega_{k+1}^{\rm e}$ and $\Omega_{k+1}^{\rm o}$, so that
\begin{align}
\Omega_{k+1}^{\rm e} &= \{ (\langle u|, 1) \vert \langle u| \in \Omega_{k}^{\rm e}\} \cup \{ (\langle u|, -1) \vert \langle u| \in \Omega_{k}^{\rm o}\},\\
\Omega_{k+1}^{\rm o} &= \{ (\langle u|, 1) \vert \langle u| \in \Omega_{k}^{\rm o}\} \cup \{ (\langle u|, -1)\langle u| \in \Omega_{k}^{\rm e}\}.
\end{align}

This implies that
\begin{align}
\sum_{\langle v| \in \Omega_{k+1}^{\rm e}} \langle v| &= \sum_{\langle u| \in \Omega_{k}^{\rm e}} (\langle u|,1) + \sum_{\langle u| \in \Omega_{k}^{\rm o}} (\langle u|,-1) \nonumber \\ 
&= (\langle 0|_k,2^{k-1}) + (\langle 0|_k,-2^{k-1}) \nonumber \\ 
&= \langle 0|_{k+1},
\end{align}
where the last equality holds because $\Omega_{k}^{\rm e}$ and $\Omega_{k}^{\rm o}$ have same number of elements, i.e., $2^{k-1}$.

On the other hand:
\begin{align}
&\sum_{\langle v| \in \Omega_{k+1}^{\rm e}} |v\rangle\langle v| \nonumber \\ =& \sum_{\langle u| \in \Omega_{k}^{\rm e}} \begin{bmatrix} |u\rangle\langle u| & |u\rangle\\ \langle u| & 1 \end{bmatrix} + \sum_{\langle u| \in \Omega_{k}^{\rm o}} \begin{bmatrix} |u\rangle\langle u| & -|u\rangle\\ -\langle u| & 1 \end{bmatrix} \nonumber \\ 
= &\begin{bmatrix} \sum\limits_{\langle u| \in \Omega_{k}^{\rm e}} |u\rangle\langle u| & |0\rangle_k \nonumber \\ \langle 0|_k & 2^{k-1} \end{bmatrix} + \begin{bmatrix} \sum\limits_{\langle u| \in \Omega_{k}^{\rm o}} |u\rangle\langle u| & |0\rangle_k\\ \langle 0|_k & 2^{k-1} \end{bmatrix} \nonumber \\
= & \begin{bmatrix} 2^{k-1} \mathbb{I}_{k} & |0\rangle_k\\ \langle 0|_k & 2^{k-1} \end{bmatrix} + \begin{bmatrix} 2^{k-1} \mathbb{I}_{k} & |0\rangle_k\\ \langle 0|_k & 2^{k-1} \end{bmatrix} \nonumber \\
= & 2^{k} \mathbb{I}_{k+1}.
\end{align}
Similarly, we can prove $\sum_{\langle v| \in \Omega_{k+1}^{\rm o}} |v\rangle\langle v| = 2^{k} \mathbb{I}_{k+1}$.

Thus, our claims hold for any $n\ge 3$.
\end{proof}

Now, Theorem \ref{th:sicn} is straightforward:

\begin{proof}
Let ${\cal N}_{n}$ be the set of rays constituting an orthonormal representation for the Newman graph $Y_n$. From Lemma~\ref{lemma:sicn}, and by definition,
\begin{equation}
\sum_{\langle v| \in {\cal N}_{n}} |v\rangle\langle v| = \frac{1}{2n} \sum_{\langle u| \in {\Omega}^{\rm e}_{n}} |u\rangle\langle u| = \frac{2^{n-2}}{n} \mathbb{I}_n,
\end{equation}
where $\langle v|$'s are normalized vectors, $\langle u|$'s are unnormalized. In the case $2^{n-2}/n > \alpha(Y_n)$, the set ${\cal N}_{n}$ is a SI-C set. In particular, this is true for ${\cal N}_{28}$ and ${\cal N}_{32}$, as claimed. It is also true for ${\cal N}_{n}$ with $n=12$, $16$, $20$, $36$, $44$, $52$, $64$, $68$, $100$, $108$, $128$, $196$, $256$, $324$, $484$, $500$, $512$, since these are the values for which Eq.~\eqref{alphaom} can be proven, satisfy $\alpha(Y_n) < |V(Y_n)|/\omega(Y_n)$, and are smaller than $668$, which is the smallest value for which no Hadamard matrix is known.
\end{proof}

For these sets of Newman states, $\eta_{\rm crit}$ tends to zero as $n$ grows. In particular, for $n=512$, $\eta_{\rm crit} < 1.6 \times 10^{-14}$.


\subsubsection{Experimental realization}


One way of preparing and measuring Pauli and Newman states of dimension $d$ is by using single photons (or neutrons or atoms or any type of radiation) in a Reck {\em et al.} $d$-input $d$-outcome multiport \cite{reck1994} or in its simplification by Clements {\em et al.} \cite{clements2016}. For state preparation, we may take further advantage from the fact that any unitary can be achieved no matter in which input the photon is injected. Therefore, we can use a specific input and remove all the elements in the paths not used. More interestingly is the possibility of simultaneously injecting several indistinguishable particles (either bosons or fermions) in different ports of a multiport interfeometer. This would allow us to achieve high $d$ using more compact setups. In this case, the problem of preparing and measuring Pauli and Newman states is still open, but can be addressed by taking advantage of the criteria for the suppression of certain output events in particular interferometers \cite{tichy2010,crespi2015}. For example, Newman states seem to be achievable using Sylvester interferometers \cite{crespi2015}.


\section{Arbitrarily small detection efficiency}
\label{sec:asde}


Here we show that, beyond specific examples, there are constructive methods such that, if there are no restrictions on the number of local settings or the local dimension of the quantum system, we can identify quantum correlations and a corresponding Bell inequality with respect to which the critical detection efficiency (above which no LHV model can be constructed) is as close to zero as desired.


\subsection{Definitions}


\begin{definition}[OR product of graphs]
The OR product (aka disjunctive product or conormal product) of two graphs $G$ and $H$ with respective vertex sets $V(G)=\{u_i\}_i$ and $V(H)=\{v_k\}_k$ is a graph $G \star H$ such that its vertex set is the Cartesian product $V(G \star H) = V(G) \times V(H)$, and any two vertices $(u_i,v_k)$ and $(u_j,v_l)$ in $G \star H$ are adjacent if and only if $u_i$ is adjacent with $u_j$ in $G$ or $v_k$ is adjacent with $v_l$ in $H$. The OR product is both associative and commutative.
\end{definition}

\begin{definition}[Spanning subgraph]
Given a graph $G$ with vertex set $V(G)$ and edge set $E(G)$, a spanning subgraph $H$ of $G$ (also known as a factor of $G$) is a subgraph of $G$ such that $V(G)=V(H)$ \cite{Diestel17}.
\end{definition}


\subsection{General results} 


\begin{theorem}
\label{th:ar}
If $G \circ H$ is the graph obtained either by the OR product or the lexicographic product of the graphs $G$ and $H$, then $\alpha(G \circ H)=\alpha(G)\alpha(H)$ and $\vartheta(G \circ H) = \vartheta(G)\vartheta(H)$.
\end{theorem}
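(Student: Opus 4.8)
The plan is to prove each of the four equalities by a matching pair of inequalities, treating the two products in parallel wherever possible. I would first dispatch the independence numbers, which are elementary. For the lower bound $\alpha(G\circ H)\ge\alpha(G)\alpha(H)$, take maximum independent sets $A\subseteq V(G)$ and $B\subseteq V(H)$ and check that $A\times B$ is independent in $G\circ H$: for the OR product two distinct vertices of $A\times B$ can be adjacent only if their $G$- or their $H$-coordinates are adjacent, which is impossible; for the lexicographic product adjacency additionally requires either $G$-adjacency (excluded) or equal $G$-coordinate together with $H$-adjacency (excluded). For the reverse inequality, given any independent set $S\subseteq V(G\circ H)$, I would show that the projection $\pi_G(S)$ is independent in $G$ (two $G$-adjacent vertices appearing as first coordinates would force an edge in either product) and that, for the lexicographic product, each fibre $S_u=\{v:(u,v)\in S\}$ is independent in $H$, whence $|S|=\sum_{u\in\pi_G(S)}|S_u|\le\alpha(G)\alpha(H)$; for the OR product one uses instead $S\subseteq\pi_G(S)\times\pi_H(S)$ with both projections independent.

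For the Lovász numbers the lower bound is equally direct and uniform across both products. Taking optimal orthonormal representations $\{|a_i\rangle\}$ with handle $|\phi\rangle$ for $G$ and $\{|b_k\rangle\}$ with handle $|\chi\rangle$ for $H$, I would assign the vector $|a_i\rangle\otimes|b_k\rangle$ to the vertex $(u_i,v_k)$ and use the handle $|\phi\rangle\otimes|\chi\rangle$. Whenever $(u_i,v_k)$ and $(u_j,v_l)$ are adjacent in $G\circ H$, at least one factor is forced to vanish: $\langle a_i|a_j\rangle=0$ if $u_i\sim u_j$, while in the remaining lexicographic case $u_i=u_j$ with $v_k\sim v_l$ one has $\langle b_k|b_l\rangle=0$ (for the OR product the vanishing comes from whichever coordinate is adjacent). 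Hence the tensor assignment is a valid orthonormal representation of $G\circ H$, and its objective factorizes as $\sum_{i,k}|\langle\phi|a_i\rangle|^2|\langle\chi|b_k\rangle|^2=\vartheta(G)\vartheta(H)$, giving $\vartheta(G\circ H)\ge\vartheta(G)\vartheta(H)$.

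The hard part is the matching upper bound $\vartheta(G\circ H)\le\vartheta(G)\vartheta(H)$, and my plan is to reduce it to classical multiplicativity results via complementation. Because the orthonormal-representation convention used here (orthogonal vectors on \emph{adjacent} vertices) makes the present $\vartheta(G)$ coincide with the standard Lov\'asz number of the complement, $\vartheta(G)=\vartheta_{\mathrm{std}}(\bar G)$, I would first verify the two product-complement identities $\overline{G\star H}=\bar G\boxtimes\bar H$ and $\overline{G[H]}=\bar G[\bar H]$, both of which follow by a short case check exploiting the absence of self-loops. Then $\vartheta(G\star H)=\vartheta_{\mathrm{std}}(\bar G\boxtimes\bar H)$ and $\vartheta(G[H])=\vartheta_{\mathrm{std}}(\bar G[\bar H])$, and I would invoke the multiplicativity of the standard Lov\'asz number under the strong product (Lov\'asz) and under the lexicographic product to close the bound. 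I expect the lexicographic case to be the genuine obstacle, since strong-product multiplicativity is classical whereas the lexicographic statement needs its own justification; a self-contained fallback I would keep in reserve is to use the dual (minimization) formulation of $\vartheta$ and build a feasible dual point for $G\circ H$ by tensoring optimal dual points for $G$ and $H$, checking that the controlling largest eigenvalue multiplies. Either route supplies the missing inequality and completes the proof.
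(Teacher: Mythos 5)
Your treatment of the independence numbers is correct and complete, and your tensor-product construction is a valid proof of $\vartheta(G\circ H)\ge\vartheta(G)\vartheta(H)$ for both products. (Note that the paper itself does not prove this theorem: it only cites Knuth and Nowakowski--Rall for the OR product and Geller--Stahl and Roberson for the lexicographic product, so any self-contained argument is already a different route.) The gap is in your upper bound, and it is concrete: the reduction rests on the identity $\vartheta(G)=\vartheta_{\mathrm{std}}(\bar G)$, and that identity is false. The paper's $\vartheta(G)$ --- the maximum of $\sum_i|\langle\psi|v_i\rangle|^2$ over representations orthogonal on \emph{adjacent} pairs --- is exactly the standard Lov\'asz number of $G$ itself, not of its complement: Lov\'asz's duality theorem (Theorem~3 of his 1979 paper) states that $\vartheta_{\mathrm{std}}(G)$, defined as a minimum over representations of $G$ in the nonadjacent-orthogonal convention, equals the maximum over representations of $\bar G$ in that same convention, i.e., over representations orthogonal on the edges of $G$. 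The convention mismatch you noticed is absorbed by this duality; it does not complement the graph. A concrete test: for $G=K_3$ the paper's $\vartheta$ is $1$ (three mutually orthogonal unit vectors force $\sum_i|\langle\psi|v_i\rangle|^2\le 1$), which equals $\vartheta_{\mathrm{std}}(K_3)=1$ and not $\vartheta_{\mathrm{std}}(\bar K_3)=3$. Consequently your chain $\vartheta(G\star H)=\vartheta_{\mathrm{std}}(\bar G\boxtimes\bar H)=\vartheta_{\mathrm{std}}(\bar G)\,\vartheta_{\mathrm{std}}(\bar H)$ proves multiplicativity of the map $G\mapsto\vartheta_{\mathrm{std}}(\bar G)$ under the OR product --- which is Lov\'asz's strong-product theorem in disguise --- rather than what the theorem needs, namely $\vartheta_{\mathrm{std}}(G\star H)\le\vartheta_{\mathrm{std}}(G)\vartheta_{\mathrm{std}}(H)$ and $\vartheta_{\mathrm{std}}(G[H])\le\vartheta_{\mathrm{std}}(G)\vartheta_{\mathrm{std}}(H)$.

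The repair is short and salvages the rest of your plan. The paper's $\vartheta$, being the standard one in the max formulation, is nonincreasing under adding edges, and on the common vertex set one has the spanning-subgraph chain $G\boxtimes H\subseteq G[H]\subseteq G\star H$ (the paper itself records the second inclusion). Hence
\[
\vartheta(G\star H)\le\vartheta(G[H])\le\vartheta(G\boxtimes H)=\vartheta(G)\vartheta(H),
\]
where the last equality is Lov\'asz's strong-product theorem, whose hard direction is proved by tensoring umbrella representations in the min formulation, not by complementation. Combined with your tensor lower bound for the OR product (which, by the same monotonicity, also lower-bounds the lexicographic product), this closes all four inequalities. Your stated fallback of tensoring dual-feasible matrices is sound for the OR product --- $A\otimes B$ is feasible there, though you must additionally rule out $\lambda_{\min}(A)\lambda_{\min}(B)$ exceeding $\lambda_{\max}(A)\lambda_{\max}(B)$ --- but it fails for the lexicographic product: for pairs with $i\neq j$, $i\not\sim_G j$ and $k\sim_H l$, the entry $(A\otimes B)_{(i,k),(j,l)}=A_{ij}B_{kl}=B_{kl}$ is unconstrained, so $A\otimes B$ need not be feasible for $G[H]$. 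The lexicographic upper bound therefore genuinely requires the sandwich argument above (or a citation, as the paper chooses to do).
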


That $\alpha$ and $\vartheta$ are both multiplicative in the OR product is proven in, e.g., \cite{Knuth94} (Sec.~21) and \cite{NR96} (Lemma~2.9). The same fact with respect to the lexicographic product is proven in, e.g., \cite{GS75} and \cite{Roberson2016}.

\begin{theorem}
\label{th:5}
For any Bell inequality of the form \eqref{bbi} associated to a graph $G^n$, denoting the OR or lexicographic product of $n$ copies of the graph $G$, local models are impossible if the detection efficiency is 
\begin{equation}
\label{eq:etac}
\eta > \sqrt{ \frac{\alpha^n}{(|V|/\xi)^n}} \ge \eta_{\rm crit}^{G^n},
\end{equation}
where $\alpha$, $|V|$, and $\xi$ are the independence number, the number of vertices, and the orthogonal rank of $G$, respectively.
\end{theorem}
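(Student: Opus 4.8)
The plan is to apply Theorem~\ref{th:3} directly to the single graph $G^n$ and then reduce the three graph invariants appearing in its bound to the corresponding invariants of $G$. Since $G^n$ is itself a graph, the relation $\eta_{\rm crit}=\sqrt{C/Q}$ established in the proof of Theorem~\ref{th:3} already guarantees that local models are impossible once $\eta>\sqrt{\alpha(G^n)/Q(G^n)}$, where $\alpha(G^n)$ is the LHV bound supplied by Theorem~\ref{th:1} and $Q(G^n)$ is any quantum value attainable with the state and measurements of Theorem~\ref{th:2}. It therefore suffices to show $\alpha(G^n)=\alpha(G)^n$ and that a quantum value $Q(G^n)\ge(|V|/\xi)^n$ can be reached.

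First I would fix the classical bound. By Theorem~\ref{th:ar}, $\alpha$ is multiplicative under both the OR and the lexicographic product, so applying it inductively $n-1$ times gives $\alpha(G^n)=\alpha(G)^n$. Together with $|V(G^n)|=|V(G)|^n$ (the vertex set of either $n$-fold product is the $n$-fold Cartesian product of $V$), this determines the numerator of the bound.

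Next I would construct the quantum realization. Starting from an orthonormal representation $\{|v_i\rangle\}_{i\in V}$ of $G$ in dimension $\xi$, I would assign to each vertex $(i_1,\ldots,i_n)$ of $G^n$ the tensor product $|v_{i_1}\rangle\otimes\cdots\otimes|v_{i_n}\rangle$ in dimension $\xi^n$. The point to verify is that this is a genuine orthonormal representation of $G^n$: two such tensor products are orthogonal as soon as one factor pair is orthogonal, which is exactly the adjacency rule of the OR product $G^{\star n}$; and since the lexicographic product is a \emph{spanning subgraph} of the OR product (its edge set is contained in that of $G^{\star n}$), the very same assignment satisfies all of its orthogonality constraints a fortiori. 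Feeding this dimension-$\xi^n$ representation and the maximally entangled state of dimension $\xi^n$ into the construction of Theorem~\ref{th:2} then yields the value $|V(G^n)|/\xi^n=|V(G)|^n/\xi(G)^n=(|V|/\xi)^n$. Combining $Q(G^n)\ge(|V|/\xi)^n$ with $\alpha(G^n)=\alpha(G)^n$ in the formula of Theorem~\ref{th:3} gives $\eta_{\rm crit}^{G^n}\le\sqrt{\alpha^n/(|V|/\xi)^n}$, as claimed.

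The step I expect to be the main obstacle is the orthonormal-representation argument for the \emph{lexicographic} product, where one must resist conflating its adjacency rule with that of the OR product; the clean resolution is the spanning-subgraph observation, which lets the OR-product representation double as a representation of the sparser lexicographic product. It is worth noting that only this direction is needed: the construction furnishes a realization of dimension $\xi^n$ attaining $(|V|/\xi)^n$, and if the true orthogonal rank of $G^n$ were strictly smaller the resulting bound on $\eta_{\rm crit}^{G^n}$ would only improve. Everything else is bookkeeping with the multiplicativity in Theorem~\ref{th:ar} and the value computation already performed in Theorem~\ref{th:2}.
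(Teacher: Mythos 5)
Your proof is correct. It shares the paper's skeleton---apply Theorem~\ref{th:3} to $G^n$ and use Theorem~\ref{th:ar} inductively to get $\alpha(G^n)=\alpha^n$---but it justifies the quantum value differently, and this is a genuine difference. The paper's own proof invokes the multiplicativity of the Lov\'asz number, $\vartheta(G^n)=\vartheta(G)^n$, and consequently must add the caveat that Theorems~\ref{th:3} and \ref{th:ar} ``can be combined whenever $\vartheta(G)=\alpha^{\ast}(G)=\frac{|V(G)|}{\xi(G)}$ \ldots but not in general.'' You bypass $\vartheta$ altogether: the tensor-product vectors form an orthonormal representation of the OR product in dimension $\xi^n$, and since $E(G[H])\subset E(G\star H)$ (a fact the paper records separately as the spanning-subgraph remark) the same vectors represent the lexicographic product a fortiori; plugging them into the explicit construction of Theorem~\ref{th:2} gives $Q(G^n)\ge(|V|/\xi)^n$ with a maximally entangled state of dimension $\xi^n$. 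This buys two things: generality, since no side condition of the form $\vartheta(G)=\alpha^{\ast}(G)=|V(G)|/\xi(G)$ is needed, so your argument establishes the stated bound for arbitrary $G$; and concreteness, since the violating realization stays in exactly the form presupposed by Theorem~\ref{th:3} (maximally entangled state plus projectors from an orthonormal representation), whereas a realization attaining $\vartheta(G^n)$ need not have that form. What the paper's route provides in exchange is the exact maximal quantum value of the product inequality under its side condition; that extra information is not needed for the theorem as stated.
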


The proof follows from Theorem~\ref{th:3} and Theorem~\ref{th:ar} for the case $H=G$. These two theorems can be combined whenever $\vartheta(G)=\alpha^{\ast}(G)=\frac{|V(G)|}{\xi(G)}$ and $\vartheta(H)=\alpha^{\ast}(H)=\frac{|V(H)|}{\xi(H)}$, but not in general.

By definition, $G[H]$ is a spanning subgraph of $G \star H$. More explicitly, $V(G[H]) = V(G \star H)$ and $E(G[H]) \subset E(G \star H)$. Therefore, by taking $G \star H$ and deleting some specific edges, we obtain $G[H]$. This implies that using the lexicographic product is more convenient, as a smaller number of edges in the final graph means that the violation of the Bell inequality \eqref{bbi} is, as we will see, more resistant to noise.


\subsection{Example: Pauli-$240^n$}


The graph of orthogonality of ${\cal P}_3 (\mathbb{R})$ has $\alpha = 16$ and $\vartheta=\alpha^{\ast}=30$. Therefore, by preparing the maximally entangled state \eqref{eq:state} of local dimension $\xi=2^3=8$ and allowing each of the parties to choose between the $240$ two-outcome measurements represented by $|v_i\rangle \langle v_i|$, with $|v_i\rangle \in {\cal P}_3 (\mathbb{R})$, they can produce a violation of the Bell inequality \eqref{bbi} which, using Theorem \ref{th:3}, allows us to conclude that $\eta_{\rm crit} \le 0.730$.

Therefore, with a system of local dimension $8^2$, and locally measuring the observables associated to the vertices of the lexicographic product of ${\cal P}_3 (\mathbb{R})$ with itself,
\begin{equation}
\eta_{\rm crit}^{{\cal P}_3^2 (\mathbb{R})} \le 0.533.
\end{equation}
And with a system of local dimension $8^3$, 
\begin{equation}
\eta_{\rm crit}^{{\cal P}_3^3 (\mathbb{R})} \le 0.389.
\end{equation}

The interest of this method is that it tends faster to $\eta_{\rm crit}=0$ using smaller $d$ than in any previous method. The downside is that, at least applied to the examples provided here, it requires too many settings.


\section{How to search for examples with low $\eta_{\rm crit}$ and a smaller number of settings}
\label{sec:search}


Most of the examples we have presented so far require too many settings to be tested in actual experiments. This leads to the question of whether we can achieve low $\eta_{\rm crit}$ using a moderate number (e.g., $< 100$) of settings. The aim of this section is showing that, arguably, the answer is affirmative. However, finding them will require some extra work. 


\subsection{First strategy: Vertex-transitive graphs}


So far, our strategy for finding examples with low $\eta_{\rm crit}$ was inspired by (the graphs of orthogonality of) sets of states common in quantum mechanics and quantum information (Pauli and Newman states). Interestingly, all our examples are sets of states whose graph of orthogonality is a vertex-transitive graph. In addition, vertex transitivity will be an important property for the second part of the paper, where optimizations of the Bell inequalities will be carried out. In fact, for a large number of settings, optimization will only be feasible for vertex-transitive graphs. 


\begin{definition}[Vertex-transitive graph]
A graph is vertex-transitive if, for every pair of vertices, there exists an automorphism of the graph mapping one to the other.
\end{definition}


Consequently, in searching for a systematic method to identify additional examples with low $\eta_{\rm crit}$ (for maximally entangled states and before any optimization), it makes sense to focus on vertex-transitive graphs.

Vertex-transitive graphs have been investigated for decades. As a fruit of these efforts (see, e.g., \cite{B80,MR90,RP89}), there are databases with all vertex-transitive graphs with up to $47$~vertices \cite{GD20}, all vertex-transitive graphs of degree~$3$ (i.e., each vertex is adjacent to three others) up to $1280$~vertices \cite{PSV13}, all circulant graphs up to $60$ vertices, and all circulant graphs with degrees at most $20$ up to $65$~vertices, at most $16$ up to $70$~vertices, and at most $12$ up to $100$~vertices~\cite{CombData}. Therefore, we can use these databases to compute $\eta_{\rm crit}$ for all these graphs and their complements and then select those that are interesting.

For any graph $G$, $\omega(G) \leq \vartheta(\overline{G}) \leq \xi(G) \leq \chi(G)$, where $\omega(G)$, $\vartheta(\overline{G})$, $\xi(G)$, and $\chi(G)$ are, respectively, the clique number, the Lov\'asz number of the complement of $G$, the orthogonal rank, and the chromatic number \cite{Knuth94, Lovasz2019}. The clique number $\omega$ is a trivial lower bound for $\xi$. The problem is that $\xi$ cannot be computed efficiently. However, in all the examples with low $\eta_{\rm crit}$ that we have identified, $\xi=\omega$. Therefore, we can use the databases and compute, for each $|V|$, the minimum of $\sqrt{\frac{\alpha}{|V|/\omega}}$. This gives a lower bound for $\eta_{\rm crit}$ that can be expected (for maximally entangled states and before any optimization) for the corresponding set of graphs. The results of these computations for all vertex-transitive graphs up to $47$~vertices are presented in Table~\ref{tab:vt_min}.

Table~\ref{tab:vt_min} shows that the aforementioned lower bound for $\eta_{\rm crit}$ decreases as the number of vertices increases. Moreover, it suggests that (for maximally entangled states and before any optimization) there are examples with $\eta_{\rm crit} < 0.5$ and $|V|<100$ vertices.

We can use existing computational tools~\cite{xu2021state} to estimate the exact $\xi$.
To find a orthogonal representation in $\mathbb{R}^d$ (or in $\mathbb{C}^d$) with minimal $\xi$ of the promising graphs, we can write each vector in the orthogonal representation as a unit vector using $d$ (or $2d$) real variables, and rotating the orthogonal representation into some canonical position to reduce the number of variables. Then, we take into account that the automorphisms of the graph (which can be easily computed) lead to geometric symmetries in the orthogonal representation. Then, using numerical optimization software, we run the minimization problem where the objective is to minimize the sum of squares of inner products for $\mathbb{R}^d$ (or the sum of squares of absolute values of inner products for $\mathbb{C}^d$), where the sum is taken over those pairs of vectors that are supposed to be orthogonal in the orthogonal representation. Notice that the automorphisms of the graph dramatically reduce the number of variables in the optimization problem because now we need only one vector per each orbit of a symmetry group. This works with many dozens of variables quite well. Maple and other software can run this in arbitrary precision from which one may recover analytical expressions for the solutions.


\begin{table}[h]
\centering
\setlength{\tabcolsep}{10pt}
\begin{tabular}{crl}
\hline
\hline
$|V|$ & $n$ & $\min\sqrt{\frac{\alpha}{|V|/\omega}}$ \\
        \hline
        $18$ & $380$ 		& $0.816$\\
        $19$ & $60$       	& $0.795$\\
        $20$ & $1214$		& $0.775$\\
        $21$ & $240$ 		& $0.756$\\
        $22$ & $816$ 		& $0.739$\\
        $23$ & $188$  		& $0.780$\\
        $24$ & $15506$ 		& $0.707$\\
        $25$ & $464$ 		& $0.775$\\
        $26$ & $4236$  		& $0.734$\\
        $27$ & $1434$  		& $0.745$\\
        $28$ & $25850$		& $0.732$\\
        $29$ & $1182$ 		& $0.719$\\
        $30$ & $46308$  	& $0.707$\\
        $31$ & $2192$ 		& $0.696$\\
        $32$ & $677402$ 	& $0.667$\\
        $33$ & $6768$  		& $0.625$\\
        $34$ & $132580$  	& $0.64$\\
        $35$ & $11150$ 		& $0.627$\\
        $36$ & $1963202$ 	& $0.615$\\
        $37$ & $14602$ 		& $0.604$\\
        $38$ & $814216$  	& $0.593$\\
        $39$ & $48462$  	& $0.632$\\
        $40$ & $13104170$ 	& $0.571$\\
        $41$ & $52488$ 		& $0.561$\\
        $42$ & $946226$	 	& $0.6$\\
        $43$ & $99880$  	& $0.635$\\
        $44$ & $39134640$ 	& $0.581$\\
        $45$ & $399420$  	& $0.571$\\
        $46$ & $34333800$ 	& $0.562$\\
        $47$ & $364724$ 	& $0.597$\\
\hline
\hline
\end{tabular}
\caption{The minimum value of $\sqrt{\frac{\alpha}{|V|/\omega}}$, which is a lower bound for $\eta_{\rm crit}$, for all vertex-transitive graphs with $|V| \le 47$ vertices. $n$ is the number of vertex-transitive graphs with the corresponding number of vertices.}
\label{tab:vt_min}
\end{table}


\subsection{Second strategy: Nonvertex-transitive graphs}
\label{secondstrat}


So far, we have focused on graphs that are vertex transitive. The reasons for this are that vertex-transitive graphs are relatively easy to identify and have a lot of symmetry. The latter is crucial for the optimization discussed in the second step of the method.

However, in \cite{Cabello21}, it is shown that there are other graphs leading to quantum correlations based on maximally entangled states violating a Bell inequality: Those admitting an orthonormal representation in dimension $\xi$ and nonnegative vertex weights $w=\{w_i\}_{i=1}^n$ such that $\sum_{i=1}^n w_i/\xi > \alpha(G,w)$, where $\alpha(G,w)$ is the independence number of the corresponding weighted graph.
In addition, in \cite{CKB15}, it is shown that a condition for these graphs is that the fractional chromatic number satisfies $\chi_f > \xi$. Interestingly, this condition, which is not sufficient for the graphs to have associated SI-C sets (see Theorem~1 in \cite{CKB15}) is, in fact, sufficient for having quantum correlations based on maximally entangled states violating a Bell inequality.

Therefore, another strategy to find examples with low $\eta_{\rm crit}$ would be the following: Find graphs with $\chi_f > \xi$. For each of them, find $w$, such that $\sum_{i=1}^n w_i/\xi > \alpha(G,w)$. Then, in a similar way as for Theorem~\ref{th:3}, we can prove for these graphs and weights,
\begin{equation}
\eta_{\rm crit} \le \sqrt{\frac{ \alpha(G,w)}{\sum_{i=1}^n \frac{w_i}{\xi}}}.
\end{equation}

Interestingly, since these weights are often natural numbers (e.g., for the Yu-Oh set \cite{YO12}, the weights are $2$ for $4$ of the vectors and $3$ for the other $9$ vectors \cite{Cabello21}), one can see the weighted graphs $(G,w)$ as nonweighted graphs $G$ with an extended
number of vertices (e.g., for the Yu-Oh set, $G$ would have $4 \times 2 + 9 \times 3 = 35$ vertices). Then, for finding candidates that may have a low $\eta_{\rm crit}$, we can use databases of nonweighted graphs of $13$ or more vertices (as it is known that the graphs for which $\chi_f > \xi$ must have, at least, $13$ vertices \cite{CKP16}) and there identify, first, graphs with $\chi_f > \omega$, where $\omega$ is the clique number, which is easier to compute than $\xi$. Since $\omega \le \xi$, this is a necessary condition. Later on, one can use existing computational tools~\cite{xu2021state} to obtain $\xi$.


\section{Noise and how to obtain better Bell inequalities}
\label{sec:no}


Here, we first show that, in all the examples with low $\eta_{\rm crit}$ presented so far, the values shown for $\eta_{\rm crit}$ are very sensitive to noise (Theorem~\ref{th:4}). 
The good news is that all the upper bounds for $\eta_{\rm crit}$ have been obtained with respect to a graph-based Bell inequality of the form in Eq.~\eqref{bbi}. However, for fixed correlations, we can optimize our Bell inequalities and obtain a {\em lower} value for $\eta_{\rm crit}$, which can also be more robust to noise. This optimization is discussed in Sec.~\ref{sec:opt} and uses the high symmetry of the measurements (their graph of orthogonality is vertex-transitive) producing the correlations.


\subsection{Noise and graph-based Bell inequalities}
\label{sec:noise}


In most discussions on the critical detection efficiency (e.g., \cite{Eberhard93,LS01,CRV08}), the effect of noise is modeled with the assumption that the effective state is of the form
\begin{equation}\label{eq:werner}
\rho=W |\psi\rangle \langle \psi | + (1-W) \frac{\openone}{d^2},
\end{equation}
where $|\psi\rangle$ is the targeted state, $W$ is the visibility of the state, $\openone$ is the identity, and $d$ is the dimension of the local system. In this work, we will follow this practice. However, it must be pointed out that, in some cases \cite{CFL05}, the effective state is not of the form \eqref{eq:werner} and it is important to take this into account \cite{BCCL06}.


\begin{theorem}
\label{th:3b}
For a Bell inequality of the form \eqref{bbi} associated to a graph $G(V,E)$ with vertex set $V$ and edge set $E$, and states of the form \eqref{eq:werner}, the critical visibility $W_{\rm crit}$, i.e., the minimal value of $W$ in $\eqref{eq:werner}$ for a violation in \eqref{bbi} is given by
\begin{equation}
\label{etaW}
W_{\rm crit} \le \frac{\alpha - Q_{\rm mix}}{(|V|/d) - Q_{\rm mix}},
\end{equation}
where 
\begin{equation}
\label{eq:thetamix}
Q_{\rm mix} = \frac{1}{d^2} \left(|V| - |E|/\Xi\right).
\end{equation}
\end{theorem}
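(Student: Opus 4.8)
The plan is to use that the functional $I$ in \eqref{bbi} is linear in the state, so that on the noisy state \eqref{eq:werner} it splits as $I(\rho) = W\,I(|\psi\rangle\langle\psi|) + (1-W)\,I(\openone/d^2)$, where $|\psi\rangle$ together with the local projectors are taken to be exactly the realization of Theorem~\ref{th:2} (so that $d=\xi$ throughout). The first term is then the ideal quantum value, which by Theorem~\ref{th:2} equals $|V|/\xi = |V|/d$. Everything then hinges on evaluating $I$ on the maximally mixed state, and I would simply name this quantity $Q_{\rm mix} := I(\openone/d^2)$.

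The core step is the computation of $Q_{\rm mix}$. Since \eqref{bbi} contains only joint probabilities and no marginals, I would evaluate $P(\Pi^A_i=\Pi^B_j=1)$ on $\openone/d^2$ with the rank-one projectors $|v_i\rangle\langle v_i|$ and $|v_j^*\rangle\langle v_j^*|$; the trace factorizes as $\frac{1}{d^2}\tr[|v_i\rangle\langle v_i|]\,\tr[|v_j^*\rangle\langle v_j^*|]$ and gives the constant value $1/d^2$ for every pair, independent of $i,j$. Feeding this into \eqref{bbi}, the $|V|$ terms of the first sum contribute $|V|/d^2$, while each of the $|E|$ edges contributes $-\frac{1}{2\Xi}\cdot\frac{2}{d^2}$, so that $Q_{\rm mix} = \frac{1}{d^2}(|V| - |E|/\Xi)$, which is exactly \eqref{eq:thetamix}. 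The only place to be careful is the bookkeeping of the symmetric edge term, each edge carrying two equal joint probabilities whose sum cancels the factor of $2\Xi$ down to a single $\Xi$.

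Finally, I would impose the violation condition $I(\rho) > \alpha$, i.e. $W\,(|V|/d) + (1-W)\,Q_{\rm mix} > \alpha$, and solve the resulting linear inequality for $W$. The coefficient $|V|/d - Q_{\rm mix}$ of $W$ equals $\frac{1}{d^2}[\,|V|(d-1)+|E|/\Xi\,]$, which is manifestly positive for any nontrivial dimension $d\ge 2$, so dividing through preserves the sense of the inequality and yields $W > (\alpha - Q_{\rm mix})/((|V|/d) - Q_{\rm mix})$. This gives the stated bound \eqref{etaW}; the upper-bound form rather than an equality reflects that the quantum value of Theorem~\ref{th:2} is only a lower bound on the true maximum, so any better realization could only push $W_{\rm crit}$ lower.

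I do not anticipate a serious obstacle: the argument is essentially linearity together with a single trace evaluation on the maximally mixed state. The only points requiring attention are keeping the identification $d=\xi$ consistent throughout and the elementary check, just noted, that the coefficient of $W$ is positive, so that the final division is legitimate.
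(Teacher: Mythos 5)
Your proposal is correct and follows essentially the same route as the paper: the paper's (one-line) proof simply observes that $Q_{\rm mix}$ is the value of \eqref{bbi} on the maximally mixed state, with the linearity decomposition $W\,(|V|/d) + (1-W)\,Q_{\rm mix} > \alpha$ and the solve-for-$W$ step left implicit, exactly as you spell them out. Your explicit trace computation of $Q_{\rm mix}$, the positivity check on the coefficient of $W$, and the remark on why the result is only an upper bound are all consistent with the paper's intent.
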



\begin{proof}
Notice that $Q_{\rm mix}$ is the violation of \eqref{bbi} for the maximally mixed state.
\end{proof}


\begin{theorem}
\label{th:4}
For a Bell inequality of the form \eqref{bbi} associated to a graph $G(V,E)$ with vertex set $V$ and edge set $E$ and states of the form \eqref{eq:werner}, the critical detection efficiency $\eta_{\rm crit}$ is given by
\begin{equation}
\label{etaW}
        \eta_{\rm crit}^2 \le \frac{\alpha d^2}{|V| \left[W(d - 1) + 1\right] - |E| (1 - W)/\Xi}.
\end{equation}
\end{theorem}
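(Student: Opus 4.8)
The plan is to reduce the statement to a single-realization violation argument, exactly as in Theorem~\ref{th:3}, but now feeding in the quantum value of $I$ on the noisy state~\eqref{eq:werner} instead of on the ideal state. The crucial observation is that the functional~\eqref{bbi} contains only joint probabilities and no marginal terms, so the analysis in the proof of Theorem~\ref{th:3} carries over unchanged: associating the no-detection event with the outcome $0$ that is absent from the Collins--Gisin form, each joint probability is rescaled by $\eta^2$, and hence $I_{\rm exp}=\eta^2 Q_W$, where $Q_W$ denotes the ideal (loss-free) quantum value of $I$ evaluated on~\eqref{eq:werner}.

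First I would compute $Q_W$. Since $I$ is linear in the state and~\eqref{eq:werner} is a convex mixture, $Q_W = W\,Q_{\rm ideal}+(1-W)\,Q_{\rm mix}$, where $Q_{\rm ideal}$ is the value on the targeted maximally entangled state and $Q_{\rm mix}$ the value on $\openone/d^2$. Here the relevant dimension is $d=\xi$, so Theorem~\ref{th:2} gives $Q_{\rm ideal}=|V|/d$, while Theorem~\ref{th:3b} supplies $Q_{\rm mix}=\frac{1}{d^2}\!\left(|V|-|E|/\Xi\right)$.

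Next I would impose the violation condition. This specific strategy beats the local bound $\alpha$ whenever $\eta^2 Q_W>\alpha$, i.e.\ whenever $\eta>\sqrt{\alpha/Q_W}$; since any such $\eta$ already rules out a local model, we obtain the one-sided bound $\eta_{\rm crit}\le\sqrt{\alpha/Q_W}$ (the true critical efficiency, being an infimum over all strategies, can only be smaller). Substituting the two contributions to $Q_W$, clearing the common factor $d^2$, and collecting the coefficient of $|V|$ via $Wd+(1-W)=W(d-1)+1$ then yields precisely~\eqref{etaW}.

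All of this is routine, and I do not anticipate a genuine obstacle: the entire content sits in the linearity decomposition of $Q_W$ together with the two previously established theorems. The only point demanding care is the bookkeeping of the powers of $d$---distinguishing the single factor $d$ in $Q_{\rm ideal}$ from the $d^2$ in $Q_{\rm mix}$---and remembering that the physical dimension of the realization is $d=\xi$, so that Theorem~\ref{th:2} returns $|V|/d$ rather than the abstract $|V|/\xi$.
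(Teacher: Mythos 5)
Your proposal is correct and follows essentially the same route as the paper's own (much terser) proof: decompose the noisy quantum value by linearity as $Q' = W Q + (1-W)Q_{\rm mix}$, take $Q = |V|/d$ from Theorem~\ref{th:2} and $Q_{\rm mix}$ from Eq.~\eqref{eq:thetamix}, and insert this into the $\eta_{\rm crit}^2 \le \alpha/Q'$ relation inherited from Theorem~\ref{th:3}. Your additional care with the one-sidedness of the bound and the powers of $d$ is sound bookkeeping, not a deviation in method.
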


\begin{proof}
The quantum violation of the Bell inequality \eqref{bbi} with state \eqref{eq:werner} and perfect detection efficiency is given by
\begin{equation}
Q' = W Q + (1-W) Q_{\text{mix}},
\end{equation}
where $Q \ge \frac{|V|}{d}$ is the expected quantum violation and $Q_{\rm mix}$ is the value for the maximally mixed state, which is given by \eqref{eq:thetamix}.
Then, the critical detection efficiency in Eq.~\eqref{eq:etac1} becomes Eq.~\eqref{etaW}.
\end{proof}


Theorem \ref{th:4} implies that, for the graph-based Bell inequalities \eqref{bbi}, $\eta_{\rm crit}$ rapidly increases with the number of edges unless $W$ is very close to $1$. Therefore, although experimental values of $W$ can be as high as $0.980$ for $d=3$ and $0.943$ for $d=17$ \cite{H20}, it would be desirable to find Bell inequalities for which the same correlations (i.e., the same state and the same measurements) have a value for $\eta_{\rm crit}$ (and for $W_{\rm crit}$) that is much less sensitive to noise. This problem is addressed in the next section.


\subsection{Optimized Bell inequalities based on symmetries}
\label{sec:opt}


\subsubsection{Introduction}


The graph-based Bell inequality \eqref{bbi} only takes into account probabilities $P(\Pi_i^A=a,\Pi_j^B=b)$ in which either $i=j$ or $i$ and $j$ are adjacent in the graph $G$.
However, in a Bell test Alice and Bob independently choose their measurements in such a way that the choice on one of them is spacelike separated from the recording of the measurement outcome on the other.
Therefore, while every observable $\Pi_i^A$ of Alice is compatible with every observable $\Pi_j^B$ of Bob, inequality \eqref{bbi} does not use most of the joint probability distributions $P(\Pi_i^A=a,\Pi_j^B=b)$. All the not used distributions are thus wasted.

An interesting question is the following: What if we use the same state and measurements used in the violation of the graph-based Bell inequality \eqref{bbi} and consider all the joint probability distributions $P(\Pi_i^A=a,\Pi_j^B=b)$? Can we then obtain a better Bell inequality? 

``Better'' may mean having higher resistance to noise (i.e., lower $W_{\rm crit}$), having lower critical detection efficiency $\eta_{\rm crit}$, or both, depending on what we are interested in. For designing a loophole-free Bell test, what we need is that the experimental values for the visibility and the detection efficiency, $W_{\text{exp}}$ and $\eta_{\text{exp}}$, respectively, are both above their respective critical values. That is, we need $W_{\text{exp}} > W_{\rm crit}$ and $\eta_{\text{exp}} > \eta_{\rm crit}$.

In the next sections we show how to compute $W_{\rm crit}$ and $\eta_{\rm crit}$ for states of interest. Intentionally, in the first example, we focus on an instance which does not offer neither a low $W_{\rm crit}$ nor a low $\eta_{\rm crit}$, but which will guide us to attack more interesting examples.


\begin{figure*}[t!] 
\centering
\includegraphics[width=.875\textwidth]{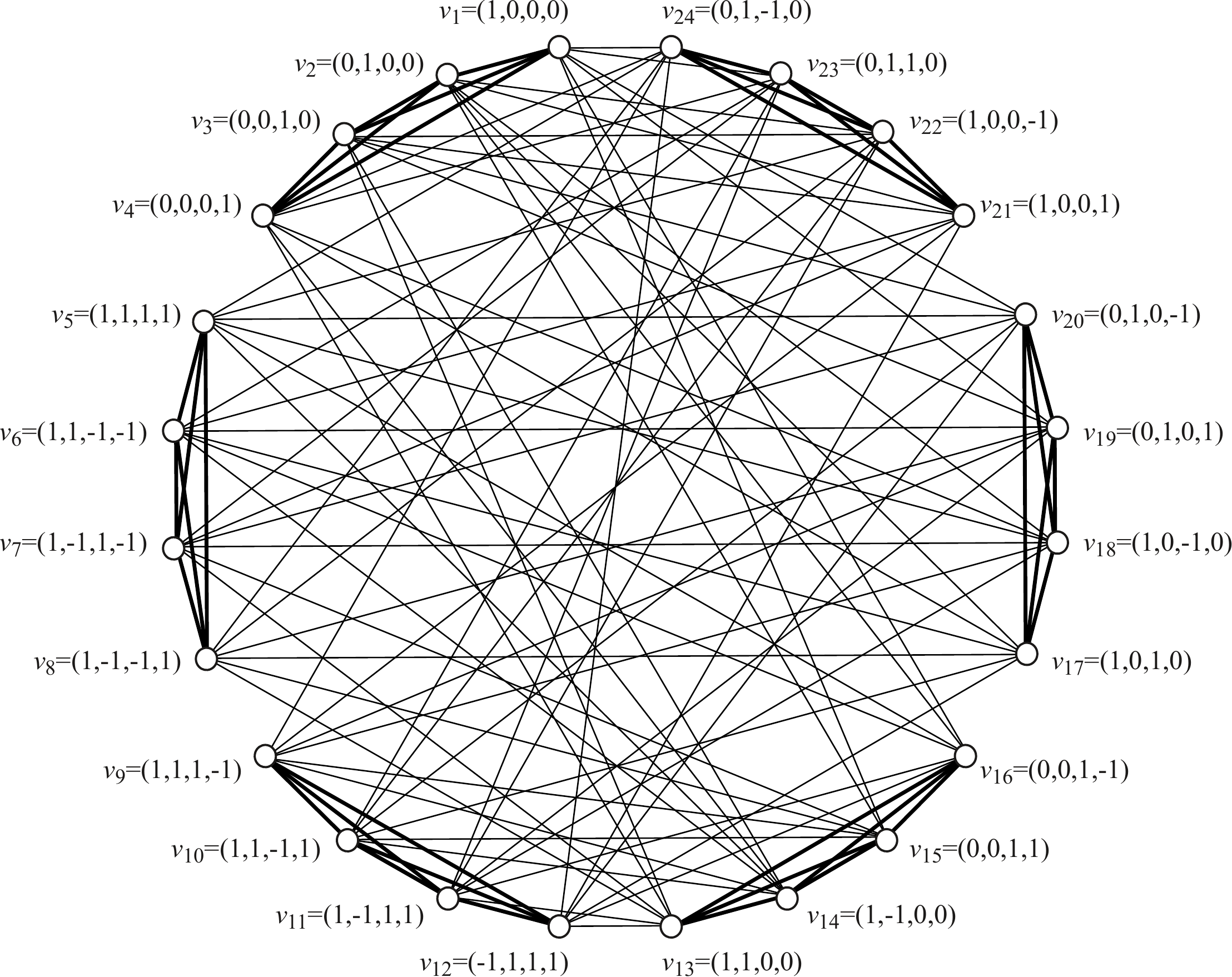}
\caption{Graph of orthogonality of the $24$ Pauli states in ${\cal P}_2(\mathbb{R})$. Vertices (dots) represent states and edges connect those that are orthogonal. The $24$ states can be distributed in $6$ disjoint orthogonal bases, which are indicated by thicker edges. \label{fig1}}
\end{figure*}


\subsubsection{Pauli-$24$}

Fig.~\ref{fig1} shows the graph of orthogonality of the $24$ (not normalized) states in ${\cal P}_2(\mathbb{R})$. This graph has $\alpha = 5$ and $\vartheta=\alpha^{\ast}=6$. Therefore, by preparing the maximally entangled state \eqref{eq:state} of local dimension $\xi=2^2=4$ and allowing each of the parties to choose between the $24$ two-outcome measurements represented by $|v_i\rangle \langle v_i|$, with $|v_i\rangle \in {\cal P}_2 (\mathbb{R})$, Alice and Bob can produce a violation of the Bell inequality \eqref{bbi} which, using Theorem~\ref{th:3}, allows us to conclude that
\begin{equation}
\label{etag}
\eta_{\rm crit}^{{\cal P}_2 (\mathbb{R})} \le 0.913
\end{equation}
and, using Theorem \ref{th:3b}, allows us to conclude that
\begin{equation}
\label{wg}
W_{\rm crit}^{{\cal P}_2 (\mathbb{R})} \le 0.911.
\end{equation}


\subsubsection{Gilbert's algorithm}


While our graph-based Bell inequalities are neither tight (i.e., facets of the local polytope) nor robust to noise, they can be further improved to offer better detection efficiency and noise robustness. This can be achieved by calculating the Bell functional in Eq.~\eqref{eq:bell_functional} using two different methods. 

The first method is a linear program which optimizes over the entire 
local polytope to find the best Bell functional~\cite{MBV21}. However, this technique requires enumerating and storing all the local deterministic points which are given by the vertices of the local polytope. This becomes an increasingly difficult computational task as the number of measurement settings increases. As an example, the local polytope corresponding to the Bell inequality with $24$ settings per party derived from $\mathcal{P}_2(\mathbb{R})$ has $2^{48}$ vertices. This number is too large to be stored on a standard computer. 
In this paper, we use a second method which is based on Gilbert's distance algorithm~\cite{G66} and does not require storing all the vertices of the local polytope. {However, it should be noted that, for the cases we are interested in (e.g., $\mathcal{P}_{2}(\mathbb{R})$), the problem is still intractable using Gilbert's original algorithm. The problem only becomes feasible when symmetries are also taken into account, as explained in Sec.~\ref{sss:gilbert_symm}.}

Gilbert's algorithm is a well-known numerical method to detect collisions between convex sets. It has been used 
for improving detection efficiencies of Bell inequalities~\cite{MBV21}, deciding whether or not a given correlation is nonlocal~\cite{MW19}, and entanglement witnessing~\cite{SG18,PSW20}. The algorithm minimizes the distance between a local point on a facet of the local polytope and a nonlocal point specified by the user. The minimization is achieved by iteratively finding a better local point that minimizes this distance. 
The algorithm terminates when the difference of distances between successive iterations falls below a certain threshold value (typically taken to be extremely small). The resulting Bell functional is then identified as the separating hyperplane between the specified nonlocal point and the local point on the facet found by minimizing the distance. 
{In the following, we will present this algorithm in more detail. The algorithm is based on an oracle which is capable of maximizing over the local polytope $\mathcal{P}$ the inner product between a point in $\mathbb{R}^{n}$. Initially, one has to specify the local polytope $\mathcal{P} \subset \mathbb{R}^{n}$, presented as the convex hull of vertices $\lbrace c_{k} \rbrace_{k}$, and a point $q \in \mathbb{R}^{n}$ associated to the given quantum correlation. Then, the algorithm proceeds as follows. First, it chooses a point $s_{0} \in \mathcal{P}$. Second, with the given point $s_{k}$, it uses the oracle to compute
\begin{align} \label{eq:gilbert_iteration}
    \Tilde{s}_{k} :&= \underset{p \in \mathcal{P}} { \text{argmax}}\, \langle q-s_{k}, p-s_{k} \rangle\nonumber\\
    &= \underset{p \in \mathcal{P}} { \text{argmax}}\, \langle q-s_{k}, p \rangle.
\end{align}
Third, given $s_{k}$ and $\Tilde{s}_{k}$, it calculates the convex combination of both which minimizes the distance to the quantum point $q$, that is,
\begin{align}
    \lambda_k := \underset{\lambda \in [0,1]}{\text{min}} \, \vert \vert (1-\lambda) s_{k} + \lambda \Tilde{s}_{k} - q \vert \vert.
\end{align}
Since the objective function in \eqref{eq:gilbert_iteration} is linear and $\mathcal{P}$ convex, the maximizer will be an extreme point of $\mathcal{P}$. In particular, $\Tilde{s}_{k}$ will be a vertex of the local polytope $\mathcal{P}$. 
The optimal value for $\lambda$ in the $k$-th iteration, denoted by $\lambda_{k}$, can be computed directly and is given by 
\begin{align}
    \lambda_{k} = \text{min} \, \left\lbrace \frac{\langle q-s_{k}, \Tilde{s}_{k}- s_{k} \rangle}{\vert \vert \Tilde{s}_{k} -s_{k} \vert \vert^{2}} , 1 \right\rbrace.
\end{align}
Then, it defines the new starting point to be $s_{k+1} := (1-\lambda_{k})s_{k} + \lambda_{k} \Tilde{s}_{k}$ and it proceeds with the second step. 
{In the standard Gilbert's algorithm, the oracle is implemented by enumerating all the vertices of the local polytope $\mathcal{P}$ to compute the inner product in the last equality of Eq.~\eqref{eq:gilbert_iteration}.
}
For a geometrical interpretation of the iteration of Gilbert's algorithm, see Fig.~\ref{generic_gilbert}.}


\begin{figure}[h!] 
\centering
\includegraphics[width=.45\textwidth]{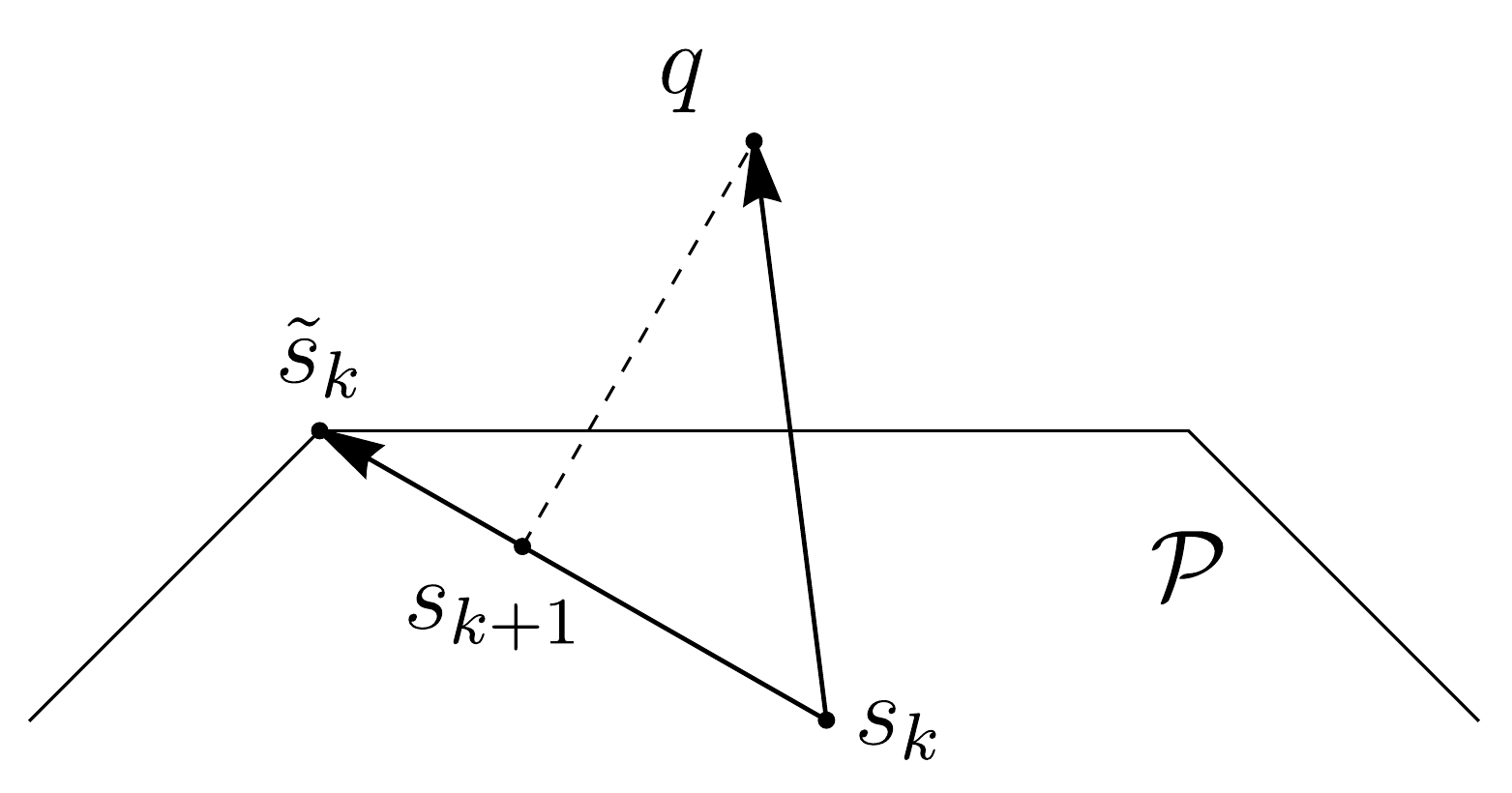}
\caption{Illustration of Gilbert's algorithm. The quantum point $q$ lies outside the local polytope $\mathcal{P}$. Starting with an arbitrary local point $s_{k} \in \mathcal{P}$, the oracle yields the point $\Tilde{s}_{k}$ within $\mathcal{P}$, maximizing the overlap with $q-s_{k}$. From there, a new starting point $s_{k+1}$ can be calculated. \label{generic_gilbert}}
\end{figure}


This algorithm provides a sequence of Bell functionals, which become better with each iteration. Note that it does not necessarily provide a tight Bell inequality like the first method.
Moreover, calculating the local bound of the resultant Bell functional still remains an NP hard problem, which again requires enumerating and storing all the local deterministic points 
at least for one party\footnote{For a given bipartite Bell functional where each measurement has only two outcomes $\pm 1$, once we have fixed the deterministic assignment of one party, the optimal value of the Bell functional and the corresponding deterministic assignments of another party are determined. Refer to Sec. 1 in appendix C of Ref.~\cite{gonzales2022optimal} for more details.}.
This issue is also shared by the oracle in the standard Gilbert's algorithm, since the evaluation of Eq.~\eqref{eq:gilbert_iteration} is equivalent to find the local bound of one Bell functional.


\subsubsection{Gilbert's algorithm with symmetrization} \label{sss:gilbert_symm}


{The two problems mentioned at the end of the previous section can be solved simultaneously by replacing the hitherto generic oracle with an oracle that takes into account the symmetries of the problem. This provides an advantage that is crucial for improving the speed of convergence.} In this respect, it is important to notice that the vertices of the local polytope (and the polytope itself) are invariant under the following invertible transformations: (i)~Swapping the outcomes of a measurement setting for either Alice or Bob. (ii)~Simultaneously permuting the measurement settings of Alice and Bob. (iii)~Swapping the measurement settings of Alice and Bob. {Here, ``invariant'' means that the transformations map local correlations to local correlations.}

The joint probability distributions obtained by performing measurements on a quantum state are also invariant under some of these transformations. We denote by $\mathcal{S}$ a subset of these transformations which keep the quantum joint probability distributions and the local
polytope invariant simultaneously.

We consider a matrix similar to the one in Eq.~\eqref{eq:bell_functional} in which the entries are probability distributions instead of coefficients, given as,
\begin{align}
&M_{p} =\nonumber\\ &\left(
\setlength{\tabcolsep}{0.1em}
\begin{tabular}{c|ccc}
$1$ & $P(\Pi_1^A\!=\!1)$ & \dots & $P(\Pi_m^A\!=\!1)$ \\
\hline 
$P(\Pi_1^B\!=\!1)$ & $P(\Pi_1^A\!=\!\Pi_1^B\!=\!1)$ & $\dots$ & $P(\Pi_m^A\!=\!\Pi_1^B\!=\!1)$ \\
$\vdots$ & $\vdots$ & $\ddots$ & $\vdots$ \\
$P(\Pi_m^B\!=\!1)$ & $P(\Pi_1^A\!=\!\Pi_m^B\!=\!1)$ & $\dots$ & $P(\Pi_m^A\!=\!\Pi_m^B\!=\!1)$
\end{tabular}
\right).
\label{eq:bell_prob}
\end{align}
The corresponding Bell inequality can then be calculated as $\tr(I M_p^T) \leq \lambda$. 
Under the transformations $S\in \mathcal{S}$ we have,
\begin{equation}\label{eq:symm1}
\tr(I M_p^T)\! =\! \tr(I S(M_p)^T)\! =\! \tr(S^{-1}(I) M_p^T) \leq \lambda,
\end{equation}
where $S(M_p)$ is the resultant matrix after transformation $S$.
Therefore, we have
\begin{align}
\tr(I M_p^T)\! =\! \tr\left(I \bar{M}_p^T\right)\! =\! \tr(\bar{I} M_p^T)\! =\! \tr(\bar{I} \bar{M}_p^T),
\label{eq:symmetrization}
\end{align}
where 
\begin{equation}
\bar{M}_p = \frac{1}{|\mathcal{S}|}\sum_{S_i\in \mathcal{S}} S_i(M_p),\ \bar{I} = \frac{1}{|\mathcal{S}|}\sum_{S_i\in \mathcal{S}} S_i^{-1}(M_p),
\end{equation}
and $|\mathcal{S}|$ is the cardinality of $\mathcal{S}$.
{Using this symmetry $\mathcal{S}$, it is sufficient to consider only inequalities that share this symmetry. In particular, for a symmetric inequality, the vertices of the local polytope can be partitioned into different equivalence classes with respect to that symmetry and the local bound can be computed by choosing only \textit{one} representative out of each class. This drastically reduces the total number of local vertices required, allowing us to enumerate all symmetrized local points and evaluate the local bound. }

{For convenience, here we focus on the symmetrization applied only to  Alice's measurement settings. Each equivalence class consists of vertices which can be transformed into each other by using the aforementioned symmetry transformations, while the same is not true for vertices in different equivalence classes, i.e., the partition generates disjoint sets. This leads to a modified oracle, which is much more efficient than the original one since the number of equivalence classes could much smaller than the number of all vertices. This is indeed the case for $\mathcal{P}_{2}(\mathbb{R})$. 

Now, we describe how to obtain the reduced set of vertices on which the optimization in \eqref{eq:gilbert_iteration} has to run. In the first step, we have to determine the allowed symmetry transformations $\mathcal{S}$, which should be shared by the chosen quantum point $q$ and the local polytope $\mathcal{P}$. As already pointed out, $\mathcal{P}$ is invariant under the permutation of parties, the permutation of the measurements for each party, and the permutation of the outcomes for each measurement. By construction, the chosen quantum point $q$ is also invariant under the permutation of parties. Usually, $q$ can change after the permutation of outcomes for each measurement. In general, determining the permutation symmetries of measurements in the point $q$ can be difficult when the number of measurements is large. In the cases considered here, those symmetry transformations correspond to the ones in the automorphism group of the graph associated to the SI-C sets. Therefore, the symmetry transformations used in the Gilbert's algorithm with symmetrization are the ones in the automorphism group and the permutation of parties. It remains to explain how the vertices are partitioned into different equivalence classes. 

The first important observation is that now we do not need to generate, store, and classify all vertices, since assignments with different number of $1$'s cannot be equivalent to each other. Hence, we can do the classification inductively. We start with the assignment that only contains $0$'s. Obviously, this is invariant under all possible permutations. From this, we generate all possible assignments that can be obtained by replacing one $0$ by one $1$. Within this set, we check whether some of these assignments are equivalent under the given symmetry transformations $\mathcal{S}$, which are presented as permutations. Then, we only keep one representative for each class. This procedure is repeated until no $0$ is left in the assignment vector.}

Selecting only a single vertex from each equivalence class (and all the vertices of Bob), we find that the total number of deterministic
assignments for Alice is $21564$ for the Bell inequality corresponding to $\mathcal{P}_2(\mathbb{R})$
(while, without symmetrization, it would be $2^{24}$).

We also modify the Gilbert's algorithm to evaluate the Bell functional according to the symmetrization 
procedure in Eq.~\eqref{eq:symmetrization}. Specifically, we symmetrize the local point chosen in each iteration of the program after minimizing its distance from the target nonlocal point, see Fig.~\ref{generic_gilbert_s} for a simple illustration. This results in better convergence times of the algorithm since symmetrization does not increase the distance. 



\begin{figure}[h!] 
\centering
\includegraphics[width=.45\textwidth]{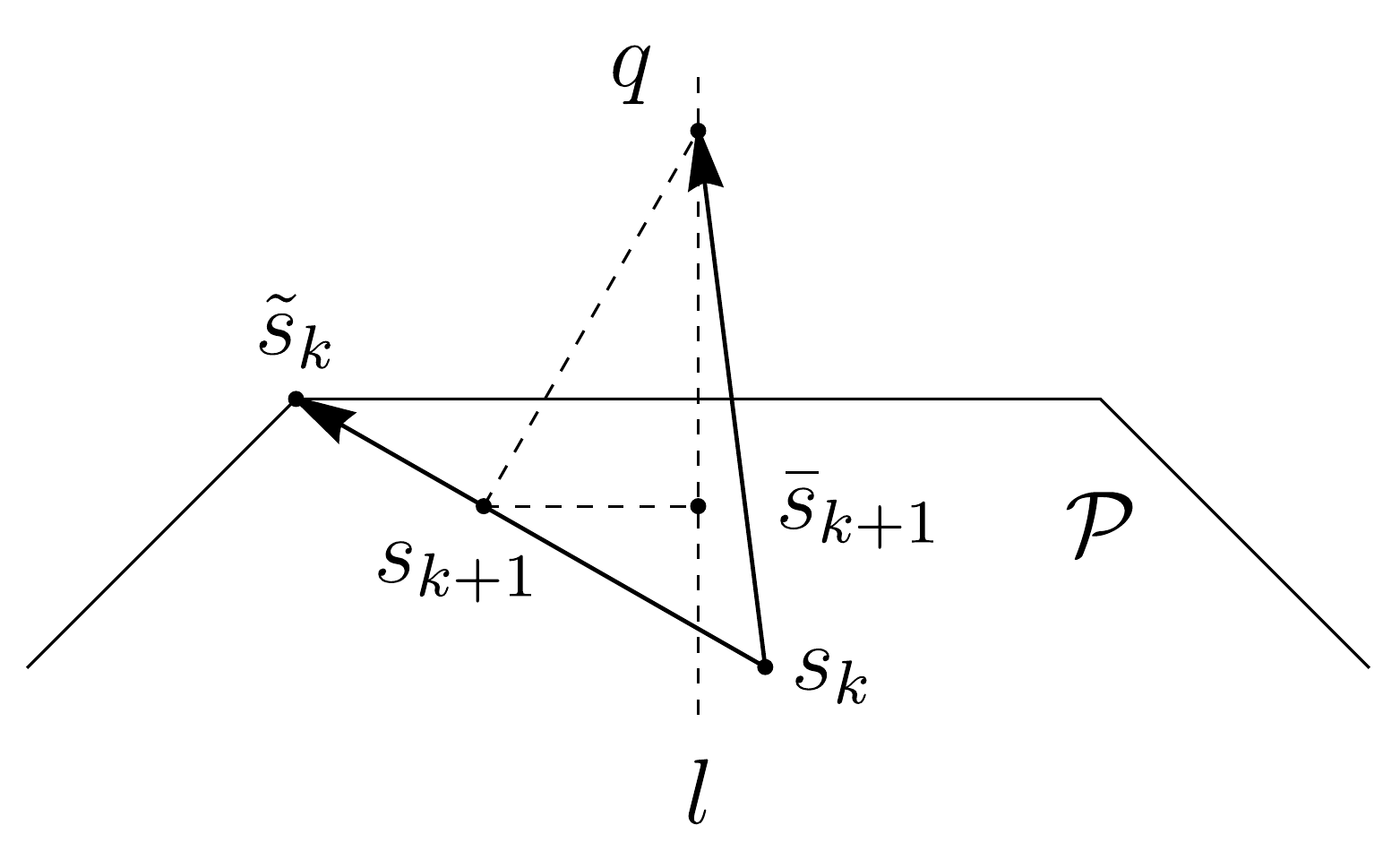}
\caption{Illustration of Gilbert's algorithm with symmetrization. In this simple example, the point $q$ and the polytope $\mathcal{P}$ are invariant under the flip around line~$l$. After the point $s_{k+1}$ has been found in the standard Gilbert's algorithm, we obtain its symmetrization $\bar{s}_{k+1}$ for the flip around line~$l$. The point $\bar{s}_{k+1}$ is used instead of $s_{k+1}$ as the new starting point for the next iteration. \label{generic_gilbert_s}}
\end{figure}

\subsubsection{Example: Bell inequality for Pauli-$24$ that minimizes $\eta_{\rm crit}$ and $W_{\rm crit}$}


As an example, here we present a Bell functional for the correlations produced by the maximal entangled state \eqref{eq:state}, with $\xi=4$, and the measurements associated to the states in $\mathcal{P}_2(\mathbb{R})$ which has lower detection efficiency and larger tolerance to white noise than those of to the graph-based Bell functional \eqref{bbi}. The purpose of this example is to show that the values of $\eta_{\rm crit}$ and $W_{\rm crit}$ obtained in the first step of the method can, in principle, always be improved. 


Applying Gilbert's algorithm and the symmetrization technique to the correlations produced by ${\cal P}_2(\mathbb{R})$ and maximally entangled states, we have obtained the following Bell inequality:
\begin{equation}
\label{eq:opP24}
I_{{\cal P}_2(\mathbb{R})} \le 0,
\end{equation}
where

\begin{equation}
\label{ec:p24-marg}
I_{{\cal P}_2(\mathbb{R})} = \left(
\begin{array}{c|rr}
& |v\rangle & |v\rangle\\ \hline
\langle v| & M_1 & M_2\\
\langle v| & M_2 & M_1
\end{array}
\right),
\end{equation}
with $M_1$ to be a matrix full of $0$'s, $|v\rangle$ to be a vector full of $-6$'s, and
\begin{equation}
\label{ec:p24-marg}
M_2 = \left(
\setlength{\tabcolsep}{0.1em}
\begin{array}{rrrrrrrrrrrrrrrrrrrrrrrr}
5 & 5 & \bar{4} & \bar{4} & 5 & 5 & \bar{4} & \bar{4} & 5 & 5 & \bar{4} & \bar{4}\\
5 & 5 & \bar{4} & \bar{4} & \bar{4} & \bar{4} & 5 & 5 & \bar{4} & \bar{4} & 5 & 5\\
\bar{4} & \bar{4} & 5 & 5 & 5 & 5 & \bar{4} & \bar{4} & \bar{4} & \bar{4} & 5 & 5\\
\bar{4} & \bar{4} & 5 & 5 & \bar{4} & \bar{4} & 5 & 5 & 5 & 5 & \bar{4} & \bar{4}\\
5 & \bar{4} & 5 & \bar{4} & 5 & \bar{4} & 5 & \bar{4} & 5 & \bar{4} & 5 & \bar{4}\\
5 & \bar{4} & 5 & \bar{4} & \bar{4} & 5 & \bar{4} & 5 & \bar{4} & 5 & \bar{4} & 5\\
\bar{4} & 5 & \bar{4} & 5 & 5 & \bar{4} & 5 & \bar{4} & \bar{4} & 5 & \bar{4} & 5\\
\bar{4} & 5 & \bar{4} & 5 & \bar{4} & 5 & \bar{4} & 5 & 5 & \bar{4} & 5 & \bar{4}\\
5 & \bar{4} & \bar{4} & 5 & 5 & \bar{4} & \bar{4} & 5 & \bar{4} & 5 & 5 & \bar{4}\\
5 & \bar{4} & \bar{4} & 5 & \bar{4} & 5 & 5 & \bar{4} & 5 & \bar{4} & \bar{4} & 5\\
\bar{4} & 5 & 5 & \bar{4} & 5 & \bar{4} & \bar{4} & 5 & 5 & \bar{4} & \bar{4} & 5\\
\bar{4} & 5 & 5 & \bar{4} & \bar{4} & 5 & 5 & \bar{4} & \bar{4} & 5 & 5 & \bar{4}
\end{array}
\right),
\end{equation}
where $\bar{4} = -4$.

For the maximally entangled state \eqref{eq:state}, the quantum value is
\begin{equation}
I_{{\cal P}_2(\mathbb{R})}= 18.
\end{equation}
In addition, for the correlations produced by ${\cal P}_2(\mathbb{R})$ and maximally entangled states,
\begin{equation}
W_{\rm crit} = \frac{7}{9} = 0.778,
\end{equation}
which is $14.62\%$ lower than the upper bound in \eqref{wg}, and
\begin{equation}
\eta_{\rm crit}= \frac{4}{5} = 0.8,
\end{equation}
which is $12.38\%$ lower than the upper bound in \eqref{etag}.

We can prove that $W_{\rm crit}$ and $\eta_{\rm crit}$ are the smallest possible values for the correlations produced by ${\cal P}_2(\mathbb{R})$ and maximally entangled states as follows: First, we collect all the $452929$ pairs of deterministic assignments for the two parties which achieve the maximal bound for LHV models. Each of them corresponds to a matrix $M_p$. After symmetrization, there are only $132$ different matrices $\bar{M}_p$, whose convex combination can lead to the corresponding quantum probability matrix either with $W=7/9$ or with $\eta = 4/5$, as one can verify with linear programming. 
Inequality \eqref{eq:opP24} is not tight. There is a tight Bell inequality providing the same $\eta_{\rm crit}$ and $W_{\rm crit}$ than the ones for inequality \eqref{eq:opP24}, but it does not have the two blocks of $0$'s that we have in \eqref{ec:p24-marg}. If we want to keep the $0$'s, inequality \eqref{eq:opP24} is the only solution. 

Notice that symmetries of the initial graph are crucial for finding \eqref{ec:p24-marg}. For example, notice that there are only $6$ different parameters in the symmetric inequality for the case of $\mathcal{P}_2(\mathbb{R})$. In comparison, there are $624$ parameters in the nonsymmetric inequality for $\mathcal{P}_2(\mathbb{R})$. In the general case, there are $2m+2$ parameters in the symmetric inequality for $\mathcal{P}_m(\mathbb{R})$ and $\mathcal{P}_m(\mathbb{C})$. This makes it also possible to find a better inequality without resorting to Gilbert's method. To be more explicit, we can choose $t$ different values for each parameter, then there are $t^{2m+2}$ different inequalities. For each inequality, we can verify whether it separates the target quantum point and the LHV polytope or not by considering only the deterministic assignments up to symmetry. As discussed before, when $m=2$, there are $21564$ different deterministic assignments for Alice up to symmetry. Therefore, for a fixed inequality, this verification can be done very fast. Similarly, we can calculate $\eta_{\rm crit}$ and $W_{\rm crit}$ for each inequality. As we can see in Eq.~\eqref{ec:p24-marg}, we can set some parameters to be $0$ for the best $\eta_{\rm crit}$ and $W_{\rm crit}$. This trick can speed up the numerical calculation further.


\section{Towards high-dimensional long-distance loophole-free Bell tests}
\label{sec:alg}


We have shown how to identify quantum correlations between systems of moderate dimension ($d \le 128$) that allow for loophole-free Bell nonlocality with low detection efficiency. Our results imply that, probably, loophole-free Bell nonlocality can be achieved in carefully designed tests involving pairs of systems of these dimensions, which is interesting by itself as it goes beyond previous loophole-free Bell tests, all of them based on qubits. More interestingly, our results also imply that loophole-free Bell nonlocality can be achieved through longer distances than those of previous loophole-free Bell tests. When photons propagate thorough fibers, they experience propagation losses proportional to the propagation distance and which depend on the optical wavelength. This means that in a Bell test over long distances (and unless we add, e.g., heralded qudit amplifiers \cite{GPS10} or split each photon into two \cite{CS12,MMGSFHCRJ16} before the local measurements), the detection efficiency decreases with the distance. For example, with telecom wavelengths, in $10$ km of fiber, we may have losses of $0.2$ dB/km, which implies multiplying by $0.64$ the detection efficiency that we had before adding $10$ km of fiber. Therefore, if we have examples of loophole-free Bell nonlocality requiring $\eta_{\rm crit} < 0.5$, we can achieve loophole-free nonlocality over $10$ km if we have $\eta_{\rm exp} >0.785$ before adding the $10$ km of fiber. Such $\eta_{\rm exp}$ has been achieved in previous photonic loophole-free experiments \cite{GVW15,SMC15}, even including ($<200$ m of) fibers and the couplings. 

In this paper, we have shown how to achieve $\eta_{\rm crit} < 0.52$ with local dimensions $16$ [see Eq.~\eqref{eq:4320}] and how to obtain Bell inequalities with even lower $\eta_{\rm crit}$ and higher resistance to noise. Now the question is what are the values of the visibility $W$ that are experimentally achievable for the required configurations. Although some previous results are very promising \cite{H20}, it is not clear to us whether similar values ($W_{\rm exp} > 0.95$) can be achieved for the states and measurements described in this paper. For further progress, we need to know what pairs $(\eta_{\rm exp}, W_{\rm exp})$ can be obtained for the type of states and measurements proposed here (e.g., for $d=16$ and Pauli-$4320$). In addition, we have introduced methods to identify further examples with low $\eta_{\rm crit}$ requiring smaller number of settings. 


\begin{figure}[t!] 
\centering
\includegraphics[width=.45\textwidth]{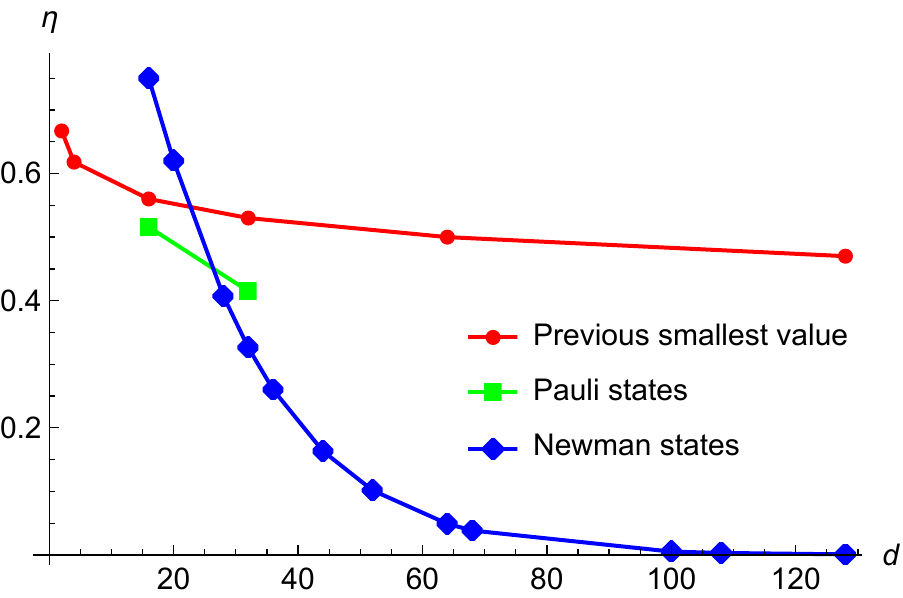}
\caption{$\eta_{\rm crit}$ as a function of the dimension $d$ of the local system. The previous smallest values are those in \cite{Eberhard93,VPB10,M22} and use nonmaximally entangled states. Pauli (Newman) states refer to the case in which the local measurements are projectors on the Pauli (Newman) states and the initial state is maximally entangled, as described in Sec.~\ref{pnstates}. \label{fig2}}
\end{figure}


As shown in Fig.~\ref{fig2}, the first step of our method already yields upper bounds for $\eta_{\rm crit}$ which are substantially smaller than the lowest values previously known for any dimension $d\ge 16$. These values indicate that there are quantum correlations which have sufficiently low $\eta_{\rm crit}$ for loophole-free Bell tests with higher-dimensional quantum systems and, eventually, over longer distances. We have also shown how to improve any of these values and we have described how we are trying to find more, and probably better, examples in the future. 

It now remains to be seen how far we can go on the experimental side. In particular, there is one question, to which we do not have the answer: Is there a way to encode high-dimensional entanglement in photons that allows, at the same time, (i) to measure all the necessary one-dimensional projectors locally and switch quickly between them, (ii) distribute the photons over, e.g., $5$--$10$ km achieving visibilities $W_{\rm exp}>0.95$, (iii) using superconducting detectors to actually achieve $\eta_{\rm exp}>0.5$? Hopefully, the combination of techniques presented here and the interaction with experimental groups will help us to produce high-dimensional loophole-free Bell nonlocality over long distances in the near future.


The code supporting the results reported in this paper is available in~\cite{code_rep}.


\section*{Acknowledgments}

{
We thank Ernesto Galv\~ao, Junior R.\ Gonzales-Ureta, Petr Lison\v{e}k, Debashis Saha, and Stefan Trandafir for discussions, and the University of Siegen for enabling our computations through the OMNI cluster. This work is supported by \href{http://dx.doi.org/10.13039/100009042}{Universidad de Sevilla} Project Qdisc (Project No.\ US-15097), with FEDER funds, \href{http://dx.doi.org/10.13039/501100011033:}{MCINN/AEI} Projet No.\ PID2020-113738GB-I00, QuantERA grant SECRET, by \href{http://dx.doi.org/10.13039/501100011033:}{MCINN/AEI} (Project No.\ PCI2019-111885-2),
Deutsche Forschungsgemeinschaft (DFG, German Research Foundation, Project No.\ 447948357 and No.\ 440958198), the Sino-German Center for Research Promotion (Project No.\ M-0294), and the ERC (Consolidator Grant 683107/TempoQ). ZPX is supported by the Humboldt Foundation. JS is supported by the House of Young Talents of the University of Siegen. JRP is supported by \href{http://dx.doi.org/10.13039/100009042}{Universidad de Sevilla} Project No.\ US-1254251, with FEDER funds, and Junta de Andaluc\'{\i}a Project No.\ P20-00592.
}


\end{document}